\documentclass[aps,prl,reprint]{revtex4-2}

\usepackage{hyperref}
\hypersetup{%
	colorlinks = true,
    linkcolor={red},
	citecolor={red},
	urlcolor={magenta}
}
\usepackage{xcolor}
\definecolor{apslinkblue}{HTML}{2d3092}

\usepackage[english]{babel} 
\usepackage{amsthm} 
\usepackage{thm-restate}
\usepackage{amssymb} 
\usepackage{tikz-cd} 
\usepackage{graphicx} 
\usepackage{amsmath} 
\usetikzlibrary{angles, quotes}
\usepackage{float}
\usepackage[capitalize]{cleveref} 

\tikzset{%
	vertices/.style={circle,minimum size=12pt,thick,fill=apslinkblue, inner sep = 0pt, text=white},
	edges/.style={draw=darkgray,very thick, shorten <=2pt, shorten >=2pt}
}
\newcommand\scalefigures{0.74} 

\newcommand{\ket}[1]{|{#1} \rangle}
\newcommand{\bra}[1]{\left \langle#1\right |}

\newcommand{\se}{\subseteq}
\newcommand{\ls}{\leqslant}
\newcommand{\gs}{\geqslant}
\newcommand{\sm}{\setminus}

\newcommand{\bigdelta}{\makebox{\large\ensuremath{\Delta}}}

\theoremstyle{plain}
\newtheorem{theorem}{Theorem}
\newtheorem{corollary}{Corollary}
\newtheorem{lemma}{Lemma}

\newtheorem{proposition}{Proposition}

\theoremstyle{remark}

\theoremstyle{definition}
\newtheorem{definition}{Definition}
\newtheorem{example}{Example}

\begin{document}

\title{The 27-qubit Counterexample to the LU-LC Conjecture is Minimal}
\author{Nathan Claudet}
\email{Nathan.Claudet@uibk.ac.at}
\affiliation{University of Innsbruck, Department of Theoretical Physics, Technikerstraße  21a, A-6020 Innsbruck, Austria}

\begin{abstract}
It was once conjectured that two graph states are local unitary (LU) equivalent if and only if they are local Clifford (LC) equivalent. This so-called LU-LC conjecture was disproved in 2007, as a pair of 27-qubit graph states that are LU-equivalent, but not LC-equivalent, was discovered. We prove that this counterexample to the LU-LC conjecture is minimal. In other words, for graph states on up to 26 qubits, the notions of LU-equivalence and LC-equivalence coincide. 
This result is obtained by studying the structure of 2-local complementation, a special case of the recently introduced r-local complementation, and a generalization of the well-known local complementation. We make use of a connection with triorthogonal codes and Reed-Muller codes.
\end{abstract}

\maketitle

Graph states form a versatile family of entangled quantum states, that allow for easy and compact representations thanks to their one-to-one correspondence with mathematical graphs \cite{Hein04, Hein06}. Graph states are universal resources for measurement-based quantum computation \cite{raussendorf2001one, raussendorf2003measurement, briegel2009measurement}, a paradigm introduced by Hans Briegel and Robert Raussendorf in the early 2000s. Graph states first appeared as a generalization of the cluster states \cite{Briegel2001}, the original resources for measurement-based quantum computation \cite{Hein04}. Graph states also arise naturally in the context of quantum error correction \cite{gottesman1997stabilizercodesquantumerror, Grassl2002, schlingemann2001quantum, schlingemann2001stabilizer, khesin2025quantum} and quantum communication networks \cite{markham2008graph, Keet2010, Javelle2013, gravier2013quantum, Bell2014secret, azuma2015all, azuma2023quantum, hahn2019quantum, bravyi2024generating, meignant2019distributing, fischer2021distributing, Mannalath2023}. In these applications, graph states are used as a resource of entanglement. It is thus an essential task to classify graph states according to their entanglement. For general quantum states, having the same entanglement is often formalized at being related by SLOCC (stochastic local operations and classical communication). For graph states in particular, this is the same as being local unitary equivalent, or \textit{LU-equivalent} for short, meaning that the graph states are related by single-qubit unitary operators \cite{Verstraete2003, Hein06}. Thus, classifying graph states according to their entanglement amounts to understanding when two graph states are LU-equivalent.

If we restrict the single-qubit unitaries to be in the so-called Clifford group, this defines a stronger notion of equivalence: the graph states are said local Clifford equivalent, or \textit{LC-equivalent} for short. LC-equivalence of graph states is particularly easy to characterize, as  it is captured by a simple and well-studied graphical operation, called local complementation \cite{VandenNest04}. 
This implies for example the existence of an efficient algorithm for recognizing LC-equivalent graph states \cite{Bouchet1991, VdnEfficientLC}.

Obviously, two LC-equivalent graph states are LU-equivalent.
Conversely, it was once conjectured that two LU-equivalent graph states are always LC-equivalent \cite{5pb}: this is often referred to as the \textit{LU-LC conjecture}. 
More precisely, we write that \textit{LU=LC} holds for a graph state $\ket G$ when every graph state LU-equivalent to $\ket G$, is actually LC-equivalent to $\ket G$: the LU-LC conjecture predicts that LU=LC holds for every graph state. 
Encouraging preliminary results were obtained, providing evidence in support of the LU-LC conjecture \cite{gross2007lu, zeng2011transversality, VandenNest05, Zeng07}. In fact, LU=LC holds for various families of graph states \cite{VandenNest05, Zeng07, Sarvepalli_2010, tzitrin2018local, claudet2024local, hahn2026structurecirclegraphstates}. However, in 2007, Zhengfeng Ji, Jianxin Chen, Zhaohui Wei, and Mingsheng Ying showed that the LU-LC conjecture is false \cite{Ji07}, as they discovered a 27-qubit counterexample, i.e. a pair of 27-qubit graph states that are LU-equivalent but not LC-equivalent. The original 27-qubit counterexample corresponds to a pair of graphs that differ by only one edge. Since, an equivalent counterexample has been found, with a more elegant bipartite form \cite{Tsimakuridze17}. We depict it in Figure \ref{fig:ce27}.

\begin{figure*}[htbp]
\centering
\scalebox{\scalefigures}{
\begin{tikzpicture}
\begin{scope}[every node/.style={circle, fill=blue, inner sep=3pt}]

    \node (u1) at (0,0) {};
    \node (u2) at (1,-0.5) {};
    \node (u3) at (2,-0.75) {};
    \node (u4) at (3,-0.75) {};
    \node (u5) at (4,-0.5) {};
    \node (u6) at (5,0) {};

    \node (v12345) at (-2,5) {};
    \node (v12346) at (-1.25,5) {};
    \node (v12356) at (-0.5,5) {};
    \node (v12456) at (0.25,5) {};
    \node (v13456) at (1,5) {};
    \node (v23456) at (1.75,5) {};

    \node (v1234) at (4,5) {};
    \node (v1235) at (4.75,5) {};
    \node (v2456) at (6.25,5) {};
    \node (v3456) at (7,5) {};    
    
\end{scope}
\begin{scope}[every edge/.style={draw=darkgray,very thick, shorten <=3pt, shorten >=3pt}]
              
    \path [-] (v12345) edge node {} (u1);
    \path [-] (v12345) edge node {} (u2);
    \path [-] (v12345) edge node {} (u3);
    \path [-] (v12345) edge node {} (u4);
    \path [-] (v12345) edge node {} (u5);

    \path [-] (v12346) edge node {} (u1);
    \path [-] (v12346) edge node {} (u2);
    \path [-] (v12346) edge node {} (u3);
    \path [-] (v12346) edge node {} (u4);
    \path [-] (v12346) edge node {} (u6);

    \path [-] (v12356) edge node {} (u1);
    \path [-] (v12356) edge node {} (u2);
    \path [-] (v12356) edge node {} (u3);
    \path [-] (v12356) edge node {} (u5);
    \path [-] (v12356) edge node {} (u6);

    \path [-] (v12456) edge node {} (u1);
    \path [-] (v12456) edge node {} (u2);
    \path [-] (v12456) edge node {} (u4);
    \path [-] (v12456) edge node {} (u5);
    \path [-] (v12456) edge node {} (u6);

    \path [-] (v13456) edge node {} (u1);
    \path [-] (v13456) edge node {} (u3);
    \path [-] (v13456) edge node {} (u4);
    \path [-] (v13456) edge node {} (u5);
    \path [-] (v13456) edge node {} (u6);

    \path [-] (v23456) edge node {} (u2);
    \path [-] (v23456) edge node {} (u3);
    \path [-] (v23456) edge node {} (u4);
    \path [-] (v23456) edge node {} (u5);
    \path [-] (v23456) edge node {} (u6);

    \path [-] (v1234) edge node {} (u1);
    \path [-] (v1234) edge node {} (u2);
    \path [-] (v1234) edge node {} (u3);
    \path [-] (v1234) edge node {} (u4);

    \path [-] (v1235) edge node {} (u1);
    \path [-] (v1235) edge node {} (u2);
    \path [-] (v1235) edge node {} (u3);
    \path [-] (v1235) edge node {} (u5);

    \path [-] (v2456) edge node {} (u2);
    \path [-] (v2456) edge node {} (u4);
    \path [-] (v2456) edge node {} (u5);
    \path [-] (v2456) edge node {} (u6);

    \path [-] (v3456) edge node {} (u3);
    \path [-] (v3456) edge node {} (u4);
    \path [-] (v3456) edge node {} (u5);
    \path [-] (v3456) edge node {} (u6);

\end{scope}
\draw (5.5,5) node(){\large\dots};
\draw [stealth-stealth,draw=black,very thick ](8,2.5) -- (9,2.5);
\begin{scope}[shift={(12,0)}]
\begin{scope}[every node/.style={circle, fill=blue, inner sep=3pt}]

    \node (u1) at (0,0) {};
    \node (u2) at (1,-0.5) {};
    \node (u3) at (2,-0.75) {};
    \node (u4) at (3,-0.75) {};
    \node (u5) at (4,-0.5) {};
    \node (u6) at (5,0) {};

    \node (v12345) at (-2,5) {};
    \node (v12346) at (-1.25,5) {};
    \node (v12356) at (-0.5,5) {};
    \node (v12456) at (0.25,5) {};
    \node (v13456) at (1,5) {};
    \node (v23456) at (1.75,5) {};

    \node (v1234) at (4,5) {};
    \node (v1235) at (4.75,5) {};
    \node (v2456) at (6.25,5) {};
    \node (v3456) at (7,5) {};    
    
\end{scope}
\begin{scope}[every edge/.style={draw=darkgray,very thick, shorten <=3pt, shorten >=3pt}]
              
    \path [-] (v12345) edge node {} (u1);
    \path [-] (v12345) edge node {} (u2);
    \path [-] (v12345) edge node {} (u3);
    \path [-] (v12345) edge node {} (u4);
    \path [-] (v12345) edge node {} (u5);

    \path [-] (v12346) edge node {} (u1);
    \path [-] (v12346) edge node {} (u2);
    \path [-] (v12346) edge node {} (u3);
    \path [-] (v12346) edge node {} (u4);
    \path [-] (v12346) edge node {} (u6);

    \path [-] (v12356) edge node {} (u1);
    \path [-] (v12356) edge node {} (u2);
    \path [-] (v12356) edge node {} (u3);
    \path [-] (v12356) edge node {} (u5);
    \path [-] (v12356) edge node {} (u6);

    \path [-] (v12456) edge node {} (u1);
    \path [-] (v12456) edge node {} (u2);
    \path [-] (v12456) edge node {} (u4);
    \path [-] (v12456) edge node {} (u5);
    \path [-] (v12456) edge node {} (u6);

    \path [-] (v13456) edge node {} (u1);
    \path [-] (v13456) edge node {} (u3);
    \path [-] (v13456) edge node {} (u4);
    \path [-] (v13456) edge node {} (u5);
    \path [-] (v13456) edge node {} (u6);

    \path [-] (v23456) edge node {} (u2);
    \path [-] (v23456) edge node {} (u3);
    \path [-] (v23456) edge node {} (u4);
    \path [-] (v23456) edge node {} (u5);
    \path [-] (v23456) edge node {} (u6);

    \path [-] (v1234) edge node {} (u1);
    \path [-] (v1234) edge node {} (u2);
    \path [-] (v1234) edge node {} (u3);
    \path [-] (v1234) edge node {} (u4);

    \path [-] (v1235) edge node {} (u1);
    \path [-] (v1235) edge node {} (u2);
    \path [-] (v1235) edge node {} (u3);
    \path [-] (v1235) edge node {} (u5);

    \path [-] (v2456) edge node {} (u2);
    \path [-] (v2456) edge node {} (u4);
    \path [-] (v2456) edge node {} (u5);
    \path [-] (v2456) edge node {} (u6);

    \path [-] (v3456) edge node {} (u3);
    \path [-] (v3456) edge node {} (u4);
    \path [-] (v3456) edge node {} (u5);
    \path [-] (v3456) edge node {} (u6);

    \path [-] (u1) edge node {} (u2);
    \path [-] (u1) edge node {} (u3);
    \path [-] (u1) edge node {} (u4);
    \path [-] (u1) edge node {} (u5);
    \path [-] (u1) edge node {} (u6);
    \path [-] (u2) edge node {} (u3);
    \path [-] (u2) edge node {} (u4);
    \path [-] (u2) edge node {} (u5);
    \path [-] (u2) edge node {} (u6);
    \path [-] (u3) edge node {} (u4);
    \path [-] (u3) edge node {} (u5);
    \path [-] (u3) edge node {} (u6);
    \path [-] (u4) edge node {} (u5);
    \path [-] (u4) edge node {} (u6);
    \path [-] (u5) edge node {} (u6);

\end{scope}
\draw (5.5,5) node(){\large\dots};
\end{scope}

\end{tikzpicture}
}
\caption{A 27-qubit counterexample to the LU-LC conjecture \cite{Ji07, Tsimakuridze17}, that is, a pair of graph states that are LU-equivalent but not LC-equivalent. The graphs have 6 bottom vertices. There is one top vertex of degree 5 per set of 5 bottom vertices, and one top vertex of degree 4 per set of 4 bottom vertices; leading to $\binom{6}{5} + \binom{6}{4} = 21$ top vertices. In the leftmost graph, the bottom vertices form an independent set, while in the rightmost graph, the bottom vertices are fully connected. Applying $X(\pi/4)$ on the top qubits and $Z(\pi/4)$ on the bottom qubits maps one graph state to the other. Proving that these two graph states are not LC-equivalent is more involved, a proof can be found in \cite{Tsimakuridze17}.}
\label{fig:ce27}
\end{figure*}
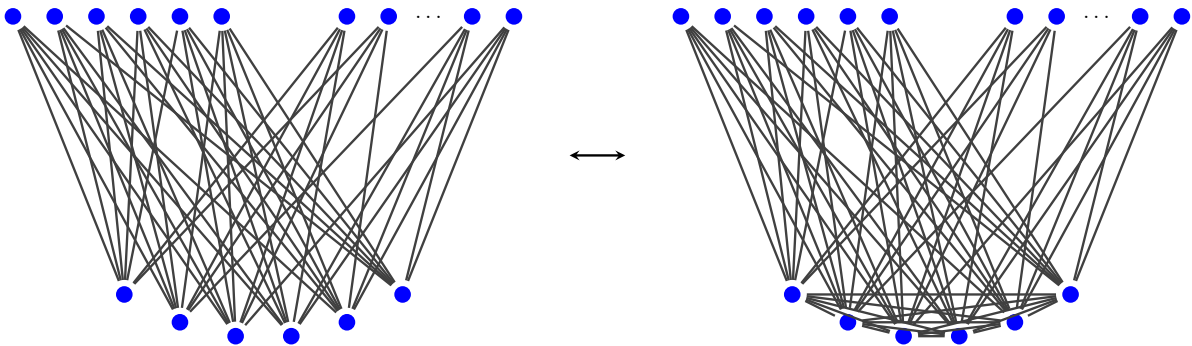

Since 2007, it has been an open question whether this 27-qubit counterexample to the LU-LC conjecture is minimal (in the number of qubits). 
In other words, does LU=LC holds for graph states on up to 26 qubits?

In 2003, by generating every graph on up to 7 vertices, it was proved that LU=LC holds for graph states on up to 7 qubits \cite{Hein04}. 
In 2009, this number was improved to 8 \cite{CABELLO20092219}. Recently, this number was improved to 10 \cite{vandre25} (when graph states are considered up to qubit permutation), then to 11 \cite{burchardt2024algorithm} (also when graph states are considered up to qubit permutation), then to 19 \cite{claudet2025deciding}.

In this paper, we provide a final answer: the 27-qubit counterexample is minimal.

\begin{theorem}\label{thm:lulc26}
    LU=LC holds for graph states on up to 26 qubits.
\end{theorem}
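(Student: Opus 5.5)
The plan is to use the $r$-local complementation framework of \cite{claudet2025deciding}. In that framework, LU-equivalence of graph states is captured graphically: two graph states are LU-equivalent if and only if the graphs are related by a sequence of $r$-local complementations, possibly combined with ordinary local complementations. Moreover, if LU-equivalence does not coincide with LC-equivalence for some graph, then (after suitable LC reductions) there must be a genuine 2-local complementation that cannot be simulated by ordinary local complementations. Since the relevant levels collapse when the number of qubits is small, I expect the statement to reduce to the following claim: for graphs on at most 26 vertices, every 2-local complementation is implementable by a sequence of ordinary local complementations. So the first step is to recall this reduction from the earlier results and state precisely which structure a counterexample must exhibit. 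Roughly, the structure is a multiset $S$ of vertices and a set of vertices $V_0$ on which $S$ acts with $\pi/4$-type phases, such that the divisibility conditions hold for every pair and triple. Equivalently, the relevant neighborhood vectors span a binary code that is triorthogonal modulo the appropriate powers of two.

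The second step is to translate a genuine 2-local complementation into a coding-theoretic object. The support of the top vertices, meaning those receiving $X(\pi/4)$ as in Figure~\ref{fig:ce27}, gives a multiset of vectors over the bottom vertices. The 2-local condition says that every single, pair, and triple of bottom vertices has a common neighbourhood of size divisible by $4$, $2$, and $1$ respectively, with a parity condition that makes the operation non-Clifford. This is exactly triorthogonality. Genuineness, meaning the operation does not decompose into ordinary local complementations, should correspond to the code not being reducible to a direct sum of trivial pieces, such as repeated vertices or twin pairs that can be handled by ordinary local complementation. By the structure theorem for triorthogonal matrices, or for codes whose weights satisfy these congruences, such a code contains a punctured or shortened Reed--Muller code $\mathcal{RM}(1,m)^*$ type core. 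I would then show that minimality forces that core to be the one from the counterexample: $6$ bottom vertices, with top vertices indexed by the $4$- and $5$-subsets.

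The third step is the counting. Let $k$ be the number of bottom vertices involved in the genuine part and $t$ the number of top vertices, so $n \geq k + t$. I would show the following.
\begin{itemize}
\item[(i)] If $k \leq 5$, every such structure is LC-reducible. The reason is that triorthogonal codes of that length are generated by weight patterns realizable by local complementations, which can be checked via the Reed--Muller characterization: the quadratic/cubic form on at most $5$ variables with the required divisibility lies in the Clifford part.
\item[(ii)] If $k \geq 6$, the divisibility constraints force $t \geq 21$, giving $n \geq 27$.
\end{itemize}
The lower bound in (ii) follows from weight considerations. The indicator function of the multiset of top vertices, viewed as a Boolean function on the bottom vertices, must have its degree-$3$ part nonzero modulo $2$ while satisfying the mod-$4$ and mod-$8$ weight conditions. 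Minimizing the support size of such a function is a minimum-weight problem in a Reed--Muller-type coset, and this minimum is $\binom{6}{5}+\binom{6}{4}=21$.

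The main obstacle is step (ii), together with making the "genuine" reduction rigorous. I must show that no cleverer configuration with more bottom vertices but fewer top vertices exists. I must also handle the case where top and bottom roles overlap, meaning vertices that are both acted on and acting, since the graph need not be bipartite. For the first issue I would try the minimum distance of the relevant coset of $\mathcal{RM}(2,k)$ within $\mathcal{RM}(3,k)$, adjusted for the mod-$4$ constraints, using the Kasami--Tokura weight results for small weights. For the second, I would first apply local complementations to bring the graph into a standard form where the 2-local complementation acts on an independent set, as the earlier results allow. Any leftover cases with small parameters I would settle by exhaustive enumeration.
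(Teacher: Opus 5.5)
\textbf{There is a genuine gap.} Your claims (i) and (ii), and your identification of ``genuine'', are the whole content of the proof. None of them is established, and (ii) is false as stated.

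\emph{The combinatorial condition is misstated.} 2-incidence of an independent set $S$ asks that $|N_G(u)\cap N_G(v)\cap S|$ and $|N_G(u)\cap N_G(v)\cap N_G(w)\cap S|$ be even for distinct $u,v,w\notin S$. There is no condition on single vertices. In the example of Figure~\ref{fig:ce27}, every bottom vertex has $15$ top neighbours and every triple of bottom vertices has $6$. So your pattern ``$4,2,1$'' fails on the known counterexample itself, and it drops the triple condition. Non-Cliffordness is also not a local parity condition. It is the global requirement that there is no $A\se S$ with $G\star^2 S=G\star^1 A$.

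\emph{Genuineness is not irreducibility.} You equate genuineness with the code not splitting into trivial pieces such as repeated columns or twins. The 24-vertex graph of Figure~\ref{fig:candidate24} refutes this. It is bipartite, all degrees are odd and at least $3$, it has no twins, and $S$ is 2-incident. It has $k=7$ bottom and $t=17$ top vertices. Its 2-local complementation is nontrivial, since it toggles three edges, yet it equals a single local complementation on $\{5,6,7\}$. So ``$k\gs 6\Rightarrow t\gs 21$'' fails unless genuineness is built in, and your argument gives no mechanism for using it. The Boolean-function argument (``degree-3 part nonzero'', ``mod-4 and mod-8 weight conditions'') does not match the actual constraints, and the value $21$ is asserted rather than derived. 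Claim (i) is likewise only asserted.

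\emph{Three ingredients are missing.}
\begin{enumerate}
\item \emph{A normalization.} Starting from an independent 2-incident $S$ with no $A\se S$ satisfying $G\star^2S=G\star^1A$:
\begin{itemize}
\item delete the edges inside $V\sm S$;
\item remove degree-1 vertices and twin pairs in $S$;
\item pivot away even-degree vertices of $V\sm S$, using $G\wedge ab-b$;
\item add at most one vertex adjacent to the even-degree vertices of $S$.
\end{itemize}
All of this must preserve 2-incidence and genuineness. The result is a bipartite graph on at most $n+1$ vertices with all degrees odd and at least $3$ and no twins. Then $[\,I\mid A\,]$ generates a unital triorthogonal subspace of length $\ls 27$; because of the added vertex, length $27$ and not $26$ has to be covered.
\item \emph{Pivot-invariance of genuineness.} Graphs with the same code are related by pivoting. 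You need a proof, e.g. via the $H_aH_b$ and $X$-rotation uniqueness arguments, that genuineness survives pivoting. Then one representative per code class suffices.
\item \emph{An actual classification of these codes.} The known classification \cite{nezami2022} leaves only the length-16 and length-24 spaces. In both, a direct check shows the 2-local complementation is trivial or a single local complementation.
\end{enumerate}
Without these, your split into $k\ls 5$ and $k\gs 6$ does not close. The fallback of exhaustive enumeration over 2-incident structures on 26 vertices is infeasible without such reductions.
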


As any stabilizer state is LC-equivalent to a graph state \cite{VandenNest04}, the result extends to stabilizer states.

\begin{corollary}
    LU=LC holds for stabilizer states on up to 26 qubits.
\end{corollary}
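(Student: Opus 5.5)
The strategy is to go through the recently introduced framework of $r$-local complementation and reduce the problem to a counting statement about a well-structured set of vertices. From the prior work on deciding LU-equivalence (\cite{claudet2025deciding}), I will take the following as known. Two graph states are LU-equivalent if and only if the graphs are related by a sequence of $r$-local complementations (up to ordinary local complementations). Moreover, if LU$\neq$LC for some graph state, then some graph in its LC-class admits a genuine $r$-local complementation with $r\geq 2$, meaning one not implementable by ordinary local complementations. The first step is to reduce to $r=2$: I will argue that a minimal counterexample must involve a genuine $2$-local complementation. The reason is that a genuine $r$-local complementation, with angles in multiples of $\pi/2^{r}$, requires a strictly larger support than a genuine $2$-local one, so any $r\geq 3$ instance on at most $26$ qubits would already contain a smaller genuine $2$-local instance.

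The core is a lower bound on the size of a graph admitting a genuine $2$-local complementation. Such an operation is specified by a multiset $S$ of vertices (the "top" vertices, where $X(\pi/4)$-type rotations act) and the affected "bottom" vertices. It is valid when certain divisibility conditions hold: for every set of one, two or three bottom vertices, the number of elements of $S$ in their common neighbourhood must be even, or divisible by $4$, or otherwise suitably constrained. This is exactly the shape of triorthogonality. I will encode the incidence vectors of top vertices of $S$, restricted to the bottom side, as columns of a binary matrix, so that the rows generate a triorthogonal code. Within this encoding, the operation is "genuine", meaning not a composition of ordinary local complementations, exactly when this code is not "trivially" triorthogonal. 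Concretely, non-genuine corresponds to the associated cubic form being reducible to degree $\leq 2$ parts, that is, to lying in the span of products of at most two coordinates.

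This places the problem in the setting of Reed--Muller codes. The condition becomes that a certain weighted sum of monomials of degree $3$ in the bottom variables is nonzero modulo lower-order terms, so the top vertices must form a set whose indicator is orthogonal to $\mathrm{RM}(\cdot)$ in the appropriate sense while not lying in a trivial subcode. I will use the classical minimum-distance and weight-structure results for Reed--Muller codes, namely Kasami--Tokura-type results on the weights just above the minimum. With these, I aim to bound the number of top vertices from below in terms of the number $k$ of bottom vertices, and to check that $\text{top}+\text{bottom}\geq 27$ in all cases. The cases to handle are $k\leq 5$, which should be impossible or force many top vertices, $k=6$, which gives the extremal $21+6$, and $k\geq 7$, where the top count must be at least $20$.

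The main obstacle is this counting step. The trade-off between few bottom vertices and few top vertices must be handled exactly, and we must also account for the freedom to choose a different graph in the LC-class and for multiplicities in $S$. For small $k$ I would first try an exhaustive case analysis of triorthogonal matrices with at most $26-k$ columns, possibly computer-assisted. For large $k$ I would try a direct weight bound: a genuine cubic part forces a codeword of weight at least $2^{\,3}$ on every restriction, which should push the top count beyond $26-k$.
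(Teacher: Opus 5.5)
\textbf{Missing reduction.} The statement is about stabilizer states, but your argument only ever deals with graph states. The step that actually makes up the paper's proof of this corollary is absent: every stabilizer state is LC-equivalent to a graph state \cite{VandenNest04}. With that fact the corollary is immediate. Take LU-equivalent stabilizer states $\ket{\psi}$ and $\ket{\phi}$ on $n\ls 26$ qubits, and graph states $\ket{G}$, $\ket{H}$ on the same qubits that are LC-equivalent to them. Then $\ket{G}$ and $\ket{H}$ are LU-equivalent, hence LC-equivalent by \cref{thm:lulc26}. Composing the local Clifford maps shows that $\ket{\psi}$ and $\ket{\phi}$ are LC-equivalent. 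Since \cref{thm:lulc26} is available, nothing more is needed. Re-deriving the graph-state case is unnecessary, and without the stabilizer-to-graph reduction it would not cover the statement anyway.

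\textbf{The graph-state argument does not close.} Even as a re-derivation of \cref{thm:lulc26}, the plan has gaps.
\begin{itemize}
\item Your reduction to $r=2$ claims that a genuine $r$-local complementation with $r\gs 3$ ``contains'' a smaller genuine 2-local one. This is unsupported. The correct input is \cref{prop:lulc_less_than_31_qubits}: on up to 31 qubits, everything reduces to local complementations plus at most one 2-local complementation.
\item Your validity conditions are misstated. 2-incidence only asks that pairs and triples of distinct vertices outside $S$ have an even number of common neighbours in $S$. There is no singleton or mod-4 condition, and $S$ is a set, not a multiset.
\item Your characterization of ``genuine'' via irreducible cubic forms is asserted, not proved.
\item Above all, the decisive bound (top $+$ bottom $\gs 27$) is explicitly left open, with only heuristics offered. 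For instance, ``at least 20 top vertices for $k\gs 7$'' is unjustified; what is actually needed is $27-k$.
\end{itemize}

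\textbf{What the paper does instead.} The counting is made tractable by the normalization in \cref{lemma:reduction}. Pivotings, which preserve 2-incidence and non-implementability by local complementations, together with vertex deletions and additions, yield a bipartite graph with these properties:
\begin{itemize}
\item at most $n+1\ls 27$ vertices;
\item all degrees odd and at least 3;
\item no twins.
\end{itemize}
Then $M_G=[\,I\mid A\,]$ generates a unital triorthogonal subspace whose length equals the number of vertices. The classification in \cite{nezami2022} leaves only the length-16 and length-24 subspaces. One checks directly that the 2-local complementation is trivial in the first and equals a single local complementation in the second; stability under pivoting makes one representative per class enough. Encoding only the biadjacency block $A$, as you do, drops both the identity block and the odd-degree normalization. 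The code length is then no longer tied to the qubit count, so that classification cannot be invoked.
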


As graph states are in one-to-one correspondence with graphs, it theoretically possible \cite{claudet2024local} but in practice out of the question to generate every graph and check whether LU=LC holds for the corresponding graph state. 
Indeed, the number of graphs on up to 26 vertices is of the order of $2^{\binom{26}{2}} \sim 7 \times 10^{97}$, and this number stays very high even when considering unlabeled graphs, i.e. isomorphism classes of graphs: a lower bound is given by $2^{\binom{26}{2}} / 26! \sim 2 \times 10^{71}$.

Our proof instead makes heavy use of the recently introduced formalism of $r$-local complementation. While local complementation captures the LC-equivalence of graph states \cite{VandenNest04}, $r$-local complementation (where $r$ is an integer) captures the LU-equivalence of graph states \cite{claudet2024local}, in the sense that two graph states are LU-equivalent if and only if the corresponding graphs are related by $r$-local complementations. 
In this work, we focus in particular on \textit{2-local complementation}. 
Indeed, 2-local complementation captures the LU-equivalence of graph states on up to 31 qubits.

We begin by giving relevant notations and definitions, before introducing 2-local complementation. Then, we prove \cref{thm:lulc26}. Finally, we discuss open questions.

~\paragraph{Notations.} A \textit{graph} $G = (V,E)$ is composed of two sets, a set $V$ of vertices, and a set $E$ of edges connecting two vertices each, i.e., a subset of $\{\{u,v\} ~|~ u,v \in V\}$. We only consider graphs that are undirected (i.e. edge do not have a direction) and simple (there is at most one edge between two distinct vertices, and no edge connects a vertex to itself). Given $A \se V$, $|A|$ denotes the number of vertices in $A$. 
We use the notation $u \sim_G v$ when $\{u,v\} \in E$, and we say $u$ and $v$ are \textit{adjacent}. Given a vertex $u \in V$, $N_G(u) = \{v\in V ~|~ u \sim_G v\}$ is the \textit{neighborhood} of $u$, i.e., the set of vertices adjacent to $u$. The \textit{degree} of a vertex is the size of its neighborhood. Two vertices $u$ and $v$ that share the same common neighborhood, more precisely $N_G(u)\setminus\{v\}=N_G(v)\setminus\{u\}$, are said to be \textit{twins}. A set of vertices $S$ is said \textit{independent} when no two vertices in $S$ are adjacent. A graph is said \textit{bipartite} if its vertex set can be partitioned into two disjoint independent set.

To any simple undirected graph $G = (V, E)$ on $n$ vertices, is associated an $n$-qubit quantum state $\ket G$, called \textit{graph state}, defined as $\ket G = \left(\prod_{\{u,v\} \in E} CZ_{uv}\right) \ket{+}_V$ where $\ket + = \frac{\ket 0 + \ket 1}{\sqrt 2}$ and $CZ = \ket{00}\bra{00}+\ket{01}\bra{01}+\ket{10}\bra{10}-\ket{11}\bra{11}$.
Alternatively, $\ket G$ is defined as the unique fixpoint (up to global phase) of the operators $X_u Z_{N_G(u)}$ for every $u \in V$, where $Z = \ket{0}\bra{0}-\ket{1}\bra{1}$ and $X = \ket{0}\bra{1}+\ket{1}\bra{0}$.
Single-qubit unitary gates include the \textit{Hadamard gate} $H = \frac{1}{\sqrt 2}(\ket{0}\bra{0}+\ket{0}\bra{1}+\ket{1}\bra{0}-\ket{1}\bra{1})$, the \textit{Z-rotation} 
$Z(\alpha) = \ket 0 \bra 0 + e^{i\alpha} \ket 1 \bra 1$ and the \textit{X-rotation} $X(\alpha) = H Z(\alpha) H$. Single-qubit Clifford gates are those generated by $H$ and $Z(\frac{\pi}{2})$, up to global phase. 

~\paragraph{Local complementation and pivoting.}
Local complementation on a vertex $u$ of a graph consists in complementing the subgraph induced by the neighborhood of $u$. Formally, a local complementation on $u$ maps the graph $G$ to the graph $G\star u= G\Delta K_{N_G(u)}$ where $\Delta$ denotes the symmetric difference on edges and $K_A$ is the complete graph on the vertices of $A \se V$. A local complementation is implemented on a graph state with single-qubit Clifford operators \cite{VandenNest04}: $X(\frac{\pi}{2})_u Z(-\frac{\pi}{2})_{N_G(u)} \ket G = \ket{G \star u}$, conversely if two graph states are LC-equivalent, the corresponding graphs are related by a sequence of local complementations \cite{VandenNest04}. 
Pivoting on an edge $uv$ consists in three successive local complementations: $G\wedge uv = G \star u \star v \star u = G \star v \star u \star v$. For a bipartite graph, pivoting on $uv$ consists in toggling all edges between $N_G(u)\sm\{v\}$ and $N_G(v)\sm\{u\}$, then swapping $u$ and $v$, keeping the graph bipartite (but changing the bipartition). A pivoting is implemented on a bipartite graph state  with two Hadamard gates \cite{van2005edge, mhalla2012graph}: $H_u H_v \ket{G} = \ket{G \wedge uv}$.

~\paragraph{2-local complementation.}

As local complementations on non-adjacent vertices commute, local complementation over an independent set $S$, called 1-local complementation and denoted $G \star^1 S$, is well-defined (and consists in applying a local complementation on every vertex in $S$ in any order). Similarly, 2-local complementation, denoted $G \star^2 S$, is applied over an independent set $S$. Furthermore, there is a condition on $S$ for the 2-local complementation to be valid, called \textit{2-incidence}. Note that in general an $r$-local complementation is defined on a multiset rather than a set, making 2-local complementation easier to introduce than the general case. 
Refer to \cite{claudet2025localequivalencesgraphstates} for an accessible introduction to $r$-local complementation.

\begin{definition}[\cite{claudet2024local}] An independent set $S$ of vertices is called 2-incident if every pair and triplet of vertices has an even number of common neighbors in $S$, i.e. for any distinct $u,v,w \in V \sm S$, $|N_G(u) \cap N_G(v) \cap S| = 0 \bmod 2$ and $|N_G(u) \cap N_G(v) \cap N_G(w) \cap S| = 0 \bmod 2$.
\end{definition}

\begin{definition}[\cite{claudet2024local}]
    A 2-local complementation on a 2-incident independent set of vertices $S$ consists in toggling edges of vertices with a number of common neighbors in $S$ that is $2 \bmod 4$. 
    Formally, $$u\sim_{G\star^2 S} v \Leftrightarrow\left(u\sim_{G} v \oplus |N_G(u)\!\cap\! N_G(v) \!\cap\! S|= 2\bmod 4\right)$$
    where $\oplus$ denotes the logical XOR.
\end{definition}

\begin{example}
    The 27-qubit counterexample to the LU-LC conjecture (see Figure \ref{fig:ce27}) corresponds to a pair of graphs mapped one to another with a 2-local complementation on the 21 top vertices (which form a 2-incident independent set).
\end{example}

\begin{proposition}[\cite{claudet2024local}]
    A 2-local complementation is implemented on a graph state with local unitaries: $$\bigotimes_{u\in S}X\left(\frac {\pi}{4}\right)\bigotimes_{v\in V \sm S}Z\left(-\frac {\pi}{4} |N_G(v) \cap S|\right)\ket{G} = \ket{G\star^2 S}$$
\end{proposition}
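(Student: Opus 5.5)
We verify the identity directly on the stabilizer/amplitude description of $\ket G$, working in the computational basis on $V\sm S$ and handling the $X(\pi/4)$ rotations on $S$ via the Hadamard conjugation $X(\alpha)=HZ(\alpha)H$. The plan is to first move to a representation where every rotation is diagonal, then compute the resulting phase function, and finally use 2-incidence to show that this phase is exactly the one defining $\ket{G\star^2 S}$.

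\textbf{Step 1: make all rotations diagonal.} Since $S$ is independent, $\ket G=\prod_{E}CZ\ket{+}_V$ can be written as
\[
\ket G=\tfrac{1}{\sqrt{2^{|V\sm S|}}}\sum_{x\in\{0,1\}^{V\sm S}}(-1)^{q(x)}\ket{x}_{V\sm S}\otimes\bigotimes_{s\in S}\ket{(-1)^{x\cdot \mathbf 1_{N(s)}}}_s ,
\]
where $q(x)$ counts the edges inside $V\sm S$ and $\ket{(-1)^b}$ denotes $\ket{+}$ or $\ket{-}$. On each $s\in S$, the operator $X(\pi/4)=HZ(\pi/4)H$ acts on $\ket{\pm}$ with eigenvalue $1$ or $e^{i\pi/4}$. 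Hence $X(\pi/4)_s$ contributes the phase $e^{i\frac{\pi}{4}\,[\,|N(s)\cap x|\text{ odd}\,]}$, with the vertex set and $x$ identified. Together with $Z(-\frac{\pi}{4}|N(v)\cap S|)_v$ on $V\sm S$, the total phase attached to the basis vector $x$ is $\exp\!\big(i\frac{\pi}{4}f(x)\big)$, where
\[
f(x)=\sum_{s\in S}\big(|N(s)\cap x|\bmod 2\big)-\sum_{v\in x}|N(v)\cap S|.
\]

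\textbf{Step 2: expand the phase function.} Use the identity $(a\bmod 2)=\sum_{k\ge1}(-2)^{k-1}e_k$, where $e_k$ is the number of $k$-subsets of the $a$ ones. Applied to $a=|N(s)\cap x|$ and summed over $s$, this gives
\[
f(x)=\sum_{k\ge 2}(-2)^{k-1}\sum_{T\subseteq x,\,|T|=k}|N(T)\cap S|,
\]
where $N(T)$ is the common neighbourhood of $T$. The $k=1$ terms cancel exactly against the $Z$ corrections. Modulo $8$, only $k=2$ and $k=3$ survive, since $2^{k-1}\equiv 0 \pmod 8$ for $k\ge4$. Applying 2-incidence, the triple terms $4|N(u)\cap N(v)\cap N(w)\cap S|$ are $\equiv 0\bmod 8$, and each pair term $-2|N(u)\cap N(v)\cap S|$ lies in $\{0,4\}\bmod 8$. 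It equals $4$ exactly when the pair count is $2\bmod 4$. Therefore $e^{i\pi f(x)/4}=\prod_{\{u,v\}\subseteq x}(-1)^{[\,|N(u)\cap N(v)\cap S|\equiv 2 \bmod 4\,]}$, which is precisely a $CZ$ toggle on those pairs. The result is $\ket{G'}$ with the edges in $V\sm S$ modified as in the definition of $G\star^2 S$.

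\textbf{Step 3: account for the remaining pairs.} Edges involving vertices of $S$ are unchanged by $G\star^2 S$, since $S$ is independent and pairs touching $S$ have no common neighbours counted in $S$ in the required sense. This is consistent with the factorized form above, in which the $S$ qubits are untouched, so the two states coincide, including global phase.

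The main obstacle is Step 2: keeping the mod-$8$ bookkeeping of the parity expansion honest, and checking that the conventions for $X(\alpha)$ match the ones used here. A convention mismatch would flip the sign of the $\pi/4$ and cause the $k=1$ terms to fail to cancel. I would verify the sign on a single path $u-s-v$ before writing the general argument.
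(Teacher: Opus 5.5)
Your proof is correct and complete. With the paper's conventions $X(\alpha)\ket{-}=e^{i\alpha}\ket{-}$, so the $k=1$ terms of the parity expansion cancel exactly against the $Z$ corrections. The 2-incidence conditions are precisely what kill the triple terms and make the pair terms $\{0,4\}$-valued modulo $8$. Since $f(0)=0$, the identity holds with no global phase. The sign check you propose at the end is therefore unnecessary. Moreover, the path $u-s-v$ is not 2-incident, so it would only exhibit the residual phase $e^{-i\pi x_ux_v/2}$, not test the statement. Step 3 holds simply because $N_G(u)\cap S=\emptyset$ for every $u\in S$, by independence.

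The paper gives no proof of this proposition; it imports it from \cite{claudet2024local}, where it is a special case of the implementation of general $r$-local complementation. Your direct phase-polynomial computation is a self-contained, elementary substitute covering exactly the case the paper needs.

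It also yields the converse almost for free. The multilinear expansion of a function $\{0,1\}^{V\sm S}\to\mathbb{Z}_8$ is unique. If the output were a graph state, comparing amplitudes on $\ket{x}_{V\sm S}\ket{0}_S$ would force $f\equiv 4\sum_{u<v}a_{uv}x_ux_v \pmod 8$ for some $a_{uv}\in\{0,1\}$. Such an expansion has only degree-2 coefficients, all in $\{0,4\}$. Hence every pair and triple count must be even, i.e.\ $S$ must be 2-incident. This is the direction that the proof of \cref{lemma:2lc_Xrotation} takes as a black box from \cite{claudet2025localequivalencesgraphstates}.
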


As mentioned above, 2-local complementation captures the LU-equivalence of small graph states.

\begin{proposition}[\cite{claudet2025deciding}] \label{prop:lulc_less_than_31_qubits}
    If two graph states on up to 31 qubits are LU-equivalent,
    then the corresponding graphs are related by a sequence of local complementations containing at most one 2-local complementation.
\end{proposition}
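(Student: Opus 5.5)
The plan is to go through the general characterization of LU-equivalence by $r$-local complementation from \cite{claudet2024local}, and then show that on at most $31$ vertices every $r$-local complementation with $r\gs 3$ can be replaced by local complementations and at most one $2$-local complementation.

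\emph{Step 1 (normal form).} From \cite{claudet2024local} I take that if $\ket G$ and $\ket{G'}$ are LU-equivalent, then $G'$ is obtained from $G$ by a sequence of local complementations together with $r$-local complementations. I then strengthen this to a normal form. First apply local complementations to bring $G$ to a standard representative of its LC-class. This representative should have a fixed set of vertices on which the LU-operator is Clifford, while the remaining vertices carry the genuinely non-Clifford diagonal part. Relative to it, the non-Clifford part of the LU-operator is a product of $X(\pi/2^r)$-type rotations on an independent (multi)set $S$ and $Z$-rotations elsewhere. Such a product acts on the graph as a \emph{single} $r$-local complementation, followed by further local complementations. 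This reduces the proposition to the following: on at most $31$ vertices, a single $r$-local complementation with $r\gs 3$ that is not already LC-reducible cannot occur, and two consecutive $2$-local complementations can be merged into one.

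\emph{Step 2 (coding-theoretic reformulation).} Encode an $r$-incident multiset $S$ by the binary code spanned by the restrictions $N_G(v)\cap S$ for $v\in V\sm S$, with multiplicities. The $r$-incidence conditions say that every product of up to $r+1$ such neighbourhood vectors has weight divisible by $2^{r-k+1}$ for the appropriate $k$, and the toggled edges are read off from weights modulo $2^{r+1}$. For $r=2$ this is exactly triorthogonality. The key fact I would establish is that an $r$-local complementation which is not equivalent to a lower-level one forces this code to be, after puncturing and shortening, a Reed--Muller-like code $RM(m-r-1,m)$ that is divisible to level $r$. By Ax/McEliece divisibility bounds, such a code needs length at least roughly $2^{r+2}$ on $S$, plus enough outside vertices to witness a nontrivial toggle. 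For $r=2$ this count gives the minimal configurations around $27$, consistent with \cref{fig:ce27}. For $r\gs 3$ the count exceeds $31$ vertices.

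\emph{Step 3 (main obstacle).} The hard part is the minimality bound in Step 2, namely that genuine level-$3$ (or higher) behaviour needs at least $32$ vertices. I would first try induction on $r$. Given a non-reducible $r$-local complementation, take a suitable subset of $S$ (the common neighbourhood of one outside vertex) to obtain a non-reducible $(r-1)$-incident configuration together with a disjoint complementary part. This should at least double the size, giving $|S|+|V\sm S|\gs 2\cdot 16=32$ from the base case. If the induction loses constants, the fallback is to use the explicit divisible-code classification for small lengths directly.

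Merging two $2$-local complementations into one should then follow from Step 1 applied with $r=2$: the composite LU-operator has only $\pi/4$ phases, so the normal form produces a single $2$-incident set.
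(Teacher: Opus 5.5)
Your architecture is the right one. The paper does not prove this proposition; it imports it from \cite{claudet2025deciding}, where LU-equivalent graphs are brought by local complementations to a standard form in which they differ by a \emph{single} $r$-local complementation. The quantitative input there is the bound $c(r)\gs 2^{r+2}$ recalled in the Discussion, which forces $r\ls 2$ below $32$ vertices. In your proposal, however, both pillars are only asserted, and the connecting claims contain errors.

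\begin{itemize}
\item Step~1 (``this representative should have\dots'') \emph{is} the standard-form theorem. As written it is a hope, not an argument.
\item Your reduction target is misstated. A $2$-local complementation on $S$ is a $3$-local complementation on the doubled multiset $2S$. So the pair of Figure~\ref{fig:ce27} is a non-LC-reducible $3$-local complementation on $27\ls 31$ vertices. What you must exclude is a level-$r\gs 3$ operation that does not reduce to a single $2$-local complementation.
\item The merging argument is invalid. Interleaving local and $2$-local complementations produces per-qubit Clifford$+T$ words such as $Z(\frac{\pi}{4})X(\frac{\pi}{4})$, which lie outside the third level of the hierarchy. So ``only $\pi/4$ phases'' is not an invariant you can feed back into Step~1. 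Merging is also unnecessary: if you lower the level of the single $r$-local complementation directly, it stays a single operation.
\end{itemize}

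The genuine gap is the vertex bound, which is the whole content of the proposition.

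\begin{itemize}
\item Ax and McEliece control divisibility of weights, not lengths.
\item ``Roughly $2^{r+2}$'' is useless when the statement sits exactly at $31$ versus $32$.
\item Your aside that the count gives ``around $27$'' for $r=2$ is wrong. A Reed--Muller count gives only about $2^4=16$, and such configurations exist (Figure~\ref{fig:candidate16}). Reaching $27$ required the classification used in this paper.
\end{itemize}

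Here is what does work. Reduced mod $2$, $r$-incidence says that on the odd-multiplicity part $S_{\text{odd}}$, all $j$-fold common neighbourhoods with $2\ls j\ls r+1$ are even. (The actual condition is divisibility by $2^{r-1}$ for pairs and by $2^{r-j+2}$ for $j$-sets with $3\ls j\ls r+1$; your exponent is off.) Merge twins and discard vertices of degree $\ls 1$; neither step changes incidence or the toggled edges. Then pad the columns $N_G(s)$, $s\in S_{\text{odd}}$, with $e_b$ for each $b\in V\sm S$ of odd degree into $S_{\text{odd}}$, and with $0$ if needed for parity. The indicator of the resulting set $P\se\mathbb F_2^{d}$, $d=|V\sm S|$, is orthogonal to $\mathrm{RM}(r+1,d)$. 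Hence either $P=\emptyset$, in which case all multiplicities are even and the level drops by one, or $|P|\gs 2^{r+2}$.

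But $|P|\ls |S_{\text{odd}}|+|V\sm S|+1$, so for $r=3$ this gives only $n\gs 31$, and the bound is attained. Take $5$ outside vertices and $26$ vertices of $S$, one per subset of size $\gs 2$. In this $31$-vertex graph $S$ is $3$-incident, with common-neighbour counts $8,4,2$ for pairs, triples and quadruples. You need an extra argument that this extremal case is harmless. Here it is: pair counts are $8\equiv 0 \bmod 8$, so nothing toggles, and any other odd multiplicities differ from it by $2m'$ with $m'$ $2$-incident.

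Your induction has the same defect and more. $S\cap N_G(b)$ and $S\sm N_G(b)$ are indeed $(r-1)$-incident in $G-b$, but they share $V\sm S$, and non-reducibility need not pass to either half. Done correctly, the recursion is just the Reed--Muller minimum-distance argument on $P$, which lands at $31$, not $32$.
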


In other words, a counterexample to the LU-LC conjecture on up to 31 qubits exhibits a 2-local complementation that cannot be implemented with local complementations. We will prove \cref{thm:lulc26} by showing that this does not occur in graph states on up to 26 qubits.

~\paragraph{Proof of \cref{thm:lulc26}.}
A natural strategy is to generate every independent 2-incident set that can occur in a graph on up to 26 vertices. This is easier than generating every graph on up to 26 vertices, still, there are too many possible independent 2-incident sets for an exhaustive generation in  reasonable time. We thus further reduce the problem.

\begin{restatable}{lemma}{reduction} \label{lemma:reduction}
    Suppose there exists an $n$-qubit counterexample to the LU-LC conjecture where $n \ls 31$. Then, there exists a graph $G$ bipartite with respect to a bipartition $S, V \sm S$ of the vertices such that:
    \begin{enumerate}
        \item $G$ has at most $n+1$ vertices;
        \item the degree of every vertex is odd and at least 3;
        \item no two distinct vertices are twins;
        \item $S$ is 2-incident;
        \item there exists no set $A \se S$ such that $G \star^2 S = G \star^1 A$.
    \end{enumerate}
\end{restatable}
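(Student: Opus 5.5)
Starting from an $n$-qubit counterexample with $n \ls 31$, I will first apply Proposition~\ref{prop:lulc_less_than_31_qubits}. It gives graphs $G_1, G_2$ with $\ket{G_1}$ and $\ket{G_2}$ LU- but not LC-equivalent, related by local complementations and a single 2-local complementation. Absorbing the LC parts, I obtain a graph $G_0$ on $n$ vertices and a 2-incident independent set $S_0$ such that $G_0\star^2 S_0$ is not LC-equivalent to $G_0$. In particular, for no $A\se S_0$ do we have $G_0\star^2 S_0 = G_0\star^1 A$, since $\star^1$ is a sequence of local complementations.

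The second step produces bipartiteness. Edges inside $V\sm S_0$ are irrelevant to both operations, in the sense that $\star^2 S_0$ and $\star^1 A$ (for $A\se S_0$) toggle edges inside $V\sm S_0$ by an amount depending only on the edges between $S_0$ and $V\sm S_0$. The relevant property, namely the existence of $A$ with equal resulting toggles, is therefore preserved when I delete all edges inside $V\sm S_0$. $S_0$ is already independent, so the result is bipartite. One subtlety remains: ``not LC-equivalent'' is stronger than ``no $A$ works'', so I will phrase the invariant throughout as condition~5, which is all the lemma asks for.

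Next I perform reductions that keep condition~5 and 2-incidence while not increasing the vertex count beyond $n+1$. Conditions~4 and~5 depend only on the bipartite adjacency between $S$ and $T=V\sm S$.
\begin{itemize}
\item A vertex of degree $0$ can be deleted.
\item If two vertices of $T$ are twins, they have the same neighbourhood in $S$, and deleting one keeps the toggle pattern consistent on the rest.
\item If two vertices of $S$ are twins, their contributions to $\star^2$ (a count mod 4 of common neighbours) combine as a $\star^1$ on their neighbourhood. Deleting both, and adjusting $A$ accordingly, preserves condition~5.
\item For a vertex $t\in T$ of even degree, I will add one extra vertex (hence the $n+1$) or use a pivoting-type trick.
\end{itemize}
Concretely, parity of degrees in $T$ relates to 2-incidence through the sum over neighbours. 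For $s\in S$ of even degree, I expect local complementation on $s$ to be expressible as a 2-local complementation on $\{s\}$ combined with adjustments, so $s$ can be removed while preserving condition~5. Degree $1$ vertices should be removable: a leaf of $S$ contributes nothing to common-neighbour counts, and a leaf of $T$ only toggles a $Z$-rotation. Pivoting (Hadamards on an edge) should convert the remaining problematic vertices. I iterate until no rule applies; termination follows because each step decreases the vertex count, except for one bounded parity-fixing addition.

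The main obstacle is the degree-parity condition, particularly for vertices in $T$ with even degree. There I would first try adding a single new vertex to $S$ adjacent to all even-degree vertices of $T$. Whether this preserves 2-incidence depends on the pair and triple counts, and those may force this new vertex's neighbourhood to be chosen more cleverly, for instance via a codeword of the associated triorthogonal code, which is where the Reed–Muller connection would enter.
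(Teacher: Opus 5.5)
The reductions up to the parity step are sound and match the paper's Steps 1--2. These are: invoking the 31-qubit result, deleting edges inside $T=V\sm S$, deleting leaves of $S$, and deleting pairs of twins in $S$. Your identity $G\star^2 S=G\star^2(S\sm\{s,s'\})\star^1\{s\}$ is exactly why the twin deletion is harmless.

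\textbf{The gap is property~2.} You leave the degree-parity reduction open, and it is the heart of the lemma. Both concrete mechanisms you suggest fail.

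\emph{Deleting an even-degree vertex of $S$} destroys 2-incidence in general. In Figure~\ref{fig:ce27}, removing the degree-4 top vertex $\{1,2,3,4\}$ leaves bottom vertices $1$ and $2$ with $\binom{4}{3}+\binom{4}{2}-1=9$ common neighbours in $S$. Moreover, $\star^2\{s\}$ is undefined once $\deg s\gs 2$, because $\{s\}$ is then not 2-incident.

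\emph{Adding a single new vertex to $S$} with neighbourhood $N\se T$ raises $|N_G(u)\cap N_G(v)\cap S|$ by exactly one for every pair $u\neq v$ in $N$. So 2-incidence survives only if $|N|\ls 1$. No choice of $N$, codeword or otherwise, can fix the parity of two or more vertices of $T$.

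\textbf{What the paper does instead.} It assigns the roles the other way round.

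An even-degree $b\in T$ is eliminated by choosing $a\in N_G(b)$ and replacing $(G,S)$ by $(G\wedge ab-b,\ S\sm\{a\})$, which also lowers the vertex count. That properties 4--5 survive this is a genuine lemma (\cref{lemma:pivoting_preserves_properties}), which your ``pivoting-type trick'' only gestures at. It is proved by writing the pivot as $H_aH_b$ and pushing it through $\bigotimes_{u\in S}X(\pi/4)$. Uniqueness of the $Z$-angles (\cref{lemma:angle_is_0,lemma:1lc_Xrotation,lemma:2lc_Xrotation}) then gives $G\star^2 S\wedge ab=G\wedge ab\star^2 S'\star^1 B$, and also pulls any $A'\se S'$ back to some $A\se S$.

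Even-degree vertices of $S$ are handled by adding one vertex to $T$, adjacent exactly to $S_{\text{even}}$. Over $\mathbb F_2$ we have $\mathbf 1_{S_{\text{even}}}=\mathbf 1_S+\sum_{w\in T}\mathbf 1_{N_G(w)}$. Hence every new pair or triple count is a sum of old pair and triple counts, so it is even. Property~5 passes by restriction.

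If the added vertex has even degree, one pivot-and-delete on it makes every degree odd (\cref{lemma:all_odd_after_pivoting}). This is what yields the bound $n+1$ and termination. Your ``one bounded addition'' is unjustified, since later deletions of leaves or twins in $S$ can spoil parity again.

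\textbf{Twins and leaves of $T$.} You need not delete these at all, and your twin deletion is argued only in the trivial direction. Extending a valid $A$ from $G-t'$ to $G$ also requires the equation for the pair $(t,t')$: $|N_G(t)\cap A|$ odd iff $|N_G(t)|\equiv 2\bmod 4$. Your argument does not establish this. Once all degrees are odd and $S$ has no leaves, 2-incidence excludes twins and leaves in $T$ automatically (\cref{lemma:notwins_nodeg1}).
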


The proof of \cref{lemma:reduction} is constructive, and is provided in the appendix.

\begin{example}
    The bipartite graph corresponding to the 27-qubit counterexample to the LU-LC conjecture (see Figure \ref{fig:ce27}, left), does not satisfy property 2. In fact, only graphs on an even number of vertices may satisfy property 2, according to the handshaking lemma. Applying \cref{lemma:reduction} to this bipartite graph leads to the 28-qubit counterexample to the LU-LC conjecture \cite{Tsimakuridze17}, depicted in Figure \ref{fig:ce28}.
\end{example}

\begin{figure*}[htbp]
\centering
\scalebox{\scalefigures}{
\begin{tikzpicture}
\begin{scope}[every node/.style={circle, fill=blue, inner sep=3pt}]

    \node (u1) at (0,0) {};
    \node (u2) at (1,-0.5) {};
    \node (u3) at (2,-0.75) {};
    \node (u4) at (3,-0.85) {};
    \node (u5) at (4,-0.75) {};
    \node (u6) at (5,-0.5) {};
    \node (u7) at (6,0) {};

    \node (v12345) at (-1,5) {};
    \node (v12346) at (-0.25,5) {};
    \node (v12356) at (0.5,5) {};
    
    \node (v23567) at (5.5,5) {};
    \node (v24567) at (6.25,5) {};
    \node (v34567) at (7,5) {};
    
\end{scope}
\begin{scope}[every edge/.style={draw=darkgray,very thick, shorten <=3pt, shorten >=3pt}]
              
    \path [-] (v12345) edge node {} (u1);
    \path [-] (v12345) edge node {} (u2);
    \path [-] (v12345) edge node {} (u3);
    \path [-] (v12345) edge node {} (u4);
    \path [-] (v12345) edge node {} (u5);

    \path [-] (v12346) edge node {} (u1);
    \path [-] (v12346) edge node {} (u2);
    \path [-] (v12346) edge node {} (u3);
    \path [-] (v12346) edge node {} (u4);
    \path [-] (v12346) edge node {} (u6);

    \path [-] (v12356) edge node {} (u1);
    \path [-] (v12356) edge node {} (u2);
    \path [-] (v12356) edge node {} (u3);
    \path [-] (v12356) edge node {} (u5);
    \path [-] (v12356) edge node {} (u6);

    \path [-] (v23567) edge node {} (u2);
    \path [-] (v23567) edge node {} (u3);
    \path [-] (v23567) edge node {} (u5);
    \path [-] (v23567) edge node {} (u6);
    \path [-] (v23567) edge node {} (u7);
    
    \path [-] (v24567) edge node {} (u2);
    \path [-] (v24567) edge node {} (u4);
    \path [-] (v24567) edge node {} (u5);
    \path [-] (v24567) edge node {} (u6);
    \path [-] (v24567) edge node {} (u7);
    
    \path [-] (v34567) edge node {} (u3);
    \path [-] (v34567) edge node {} (u4);
    \path [-] (v34567) edge node {} (u5);
    \path [-] (v34567) edge node {} (u6);
    \path [-] (v34567) edge node {} (u7);

\end{scope}
\draw (3,5) node(){\large\dots};
\draw [stealth-stealth,draw=black,very thick ](8,2.5) -- (9,2.5);
\begin{scope}[shift={(12,0)}]
\begin{scope}[every node/.style={circle, fill=blue, inner sep=3pt}]

    \node (u1) at (0,0) {};
    \node (u2) at (1,-0.5) {};
    \node (u3) at (2,-0.75) {};
    \node (u4) at (3,-0.85) {};
    \node (u5) at (4,-0.75) {};
    \node (u6) at (5,-0.5) {};
    \node (u7) at (6,0) {};

    \node (v12345) at (-1,5) {};
    \node (v12346) at (-0.25,5) {};
    \node (v12356) at (0.5,5) {};
    
    \node (v23567) at (5.5,5) {};
    \node (v24567) at (6.25,5) {};
    \node (v34567) at (7,5) {};
    
\end{scope}
\begin{scope}[every edge/.style={draw=darkgray,very thick, shorten <=3pt, shorten >=3pt}]
              
    \path [-] (v12345) edge node {} (u1);
    \path [-] (v12345) edge node {} (u2);
    \path [-] (v12345) edge node {} (u3);
    \path [-] (v12345) edge node {} (u4);
    \path [-] (v12345) edge node {} (u5);

    \path [-] (v12346) edge node {} (u1);
    \path [-] (v12346) edge node {} (u2);
    \path [-] (v12346) edge node {} (u3);
    \path [-] (v12346) edge node {} (u4);
    \path [-] (v12346) edge node {} (u6);

    \path [-] (v12356) edge node {} (u1);
    \path [-] (v12356) edge node {} (u2);
    \path [-] (v12356) edge node {} (u3);
    \path [-] (v12356) edge node {} (u5);
    \path [-] (v12356) edge node {} (u6);

    \path [-] (v23567) edge node {} (u2);
    \path [-] (v23567) edge node {} (u3);
    \path [-] (v23567) edge node {} (u5);
    \path [-] (v23567) edge node {} (u6);
    \path [-] (v23567) edge node {} (u7);
    
    \path [-] (v24567) edge node {} (u2);
    \path [-] (v24567) edge node {} (u4);
    \path [-] (v24567) edge node {} (u5);
    \path [-] (v24567) edge node {} (u6);
    \path [-] (v24567) edge node {} (u7);
    
    \path [-] (v34567) edge node {} (u3);
    \path [-] (v34567) edge node {} (u4);
    \path [-] (v34567) edge node {} (u5);
    \path [-] (v34567) edge node {} (u6);
    \path [-] (v34567) edge node {} (u7);

    \path [-] (u1) edge node {} (u2);
    \path [-] (u1) edge node {} (u3);
    \path [-] (u1) edge node {} (u4);
    \path [-] (u1) edge node {} (u5);
    \path [-] (u1) edge node {} (u6);
    \path [-] (u1) edge node {} (u7);
    \path [-] (u2) edge node {} (u3);
    \path [-] (u2) edge node {} (u4);
    \path [-] (u2) edge node {} (u5);
    \path [-] (u2) edge node {} (u6);
    \path [-] (u2) edge node {} (u7);
    \path [-] (u3) edge node {} (u4);
    \path [-] (u3) edge node {} (u5);
    \path [-] (u3) edge node {} (u6);
    \path [-] (u3) edge node {} (u7);
    \path [-] (u4) edge node {} (u5);
    \path [-] (u4) edge node {} (u6);
    \path [-] (u4) edge node {} (u7);
    \path [-] (u5) edge node {} (u6);
    \path [-] (u5) edge node {} (u7);
    \path [-] (u6) edge node {} (u7);

\end{scope}
\draw (3,5) node(){\large\dots};
\end{scope}

\end{tikzpicture}
}
\caption{A 28-qubit counterexample to the LU-LC conjecture \cite{Tsimakuridze17}, that is, a pair of graph states that are LU-equivalent but not LC-equivalent. The graphs have 7 bottom vertices. The top vertices are all of degree 5, and there is one top vertex per set of 5 bottom vertices, leading to $\binom{7}{5} = 21$ top vertices. In the leftmost graph, the bottom vertices form an independent set, while in the rightmost graph, the bottom vertices are fully connected. Applying $X(\pi/4)$ on the top qubits and $Z(\pi/4)$ on the bottom qubits maps one graph state to the other. Proving that these two graph states are not LC-equivalent is more involved, a proof can be found in \cite{Tsimakuridze17}, and a proof in the formalism of $r$-local complementation can be found in \cite{claudet2024local}. The 27-qubit counterexample to the LU-LC conjecture (see Figure \ref{fig:ce27}) can be recovered by removing one bottom vertex.}
\label{fig:ce28}
\end{figure*}
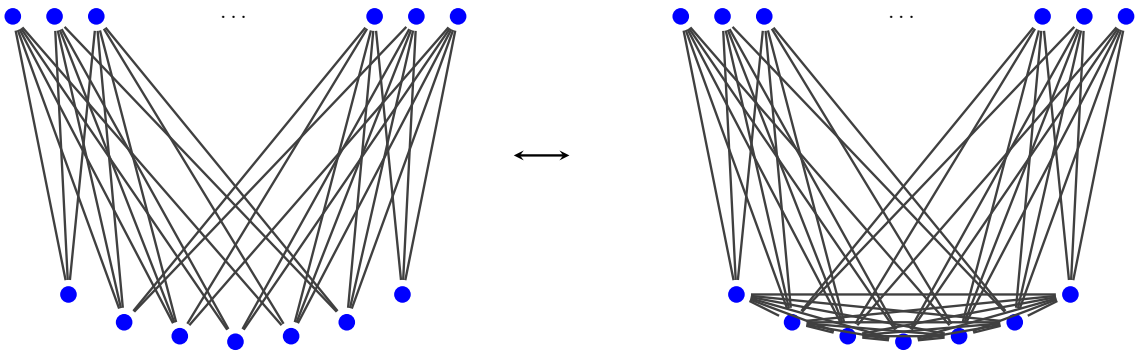

Thanks to \cref{lemma:reduction}, showing that no bipartite graph on up to 27 vertices satisfies properties 2-5, is enough to prove \cref{thm:lulc26}. A natural strategy is to generate every such graph, which may be feasible. It is however not necessary, because of a strong connection between bipartite graphs satisfying properties 2-4 and triorthogonal codes \cite{Bravyi2012, nezami2022, shi2024}, a class of quantum error-correcting codes used in magic state distillation protocols \cite{Bravyi2005, Haah2017magicstate, Haah2018codesprotocols}. This connection is not surprising because, as we will see below, the notions of 2-incidence and triorthogonality are very similar.

To make the connection with triorthogonal codes, to each graph $G$ bipartite with respect to a bipartition $S, V \sm S$ of the vertices, we associate a matrix, whose number of rows is $|V \sm S|$, and whose number of columns is $|S|+ |V \sm S| = |V|$. 
More precisely, $G$ is associated with the matrix $M_G = [~I~|~A~]$ where $I$ is the $|V \sm S| \times |V \sm S|$ identity matrix, and $A$ is the biadjacency matrix of $G$, i.e. $A$ describes the connectivity between $S$ and $V \sm S$. Namely, $A$ has $|V \sm S|$ rows and $|S|$ columns, and for any $a \in S$, $b \in V \sm S$, $A_{ba}=1$ if $a \sim_G b$, and $A_{ba}=0$ otherwise. 
This construction is analogous to the connection between bipartite graphs and binary linear codes \cite{danielsen2008edge}, or binary matroids \cite{OumCliqueWidth}. The latter connection hints at further links between entanglement of graph states and matroid theory.

\begin{restatable}{lemma}{matrix} \label{lemma:matrix}
 The bipartite graph $G$ satisfies properties 2-4 if and only if:
 \begin{itemize}
    \item every column in $M_G$ has odd Hamming weight;
    \item $M_G$ has no repeated columns; 
    \item for any three rows $r_1$, $r_2$ and  $r_3$, $\sum_{i \in [1,|V|]} {r_1}^i {r_2}^i {r_3}^i = 0 \mod 2$, i.e. the product of any three rows has even Hamming weight.
 \end{itemize}
\end{restatable}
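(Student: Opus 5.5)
The plan is to translate each of properties 2--4 into a condition on the columns and rows of $M_G=[\,I\,|\,A\,]$. Columns of $M_G$ are indexed by $V$: the column of $b\in V\sm S$ is the unit vector $e_b$, and the column of $a\in S$ is the indicator vector of $N_G(a)\se V\sm S$. Rows are indexed by $b\in V\sm S$, and row $r_b$ is the indicator of $\{b\}\cup N_G(b)$ on the column set $V$. I will prove the three equivalences separately. They are not literally item-by-item, because the diagonal entries of $I$ interact with the conditions, so I have to keep track of which graph property each matrix condition uses.

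First, the triple-product condition. For rows $r_b,r_c,r_d$ (not necessarily distinct), $\sum_i r_b^i r_c^i r_d^i$ counts the $I$-columns in $\{b\}\cap\{c\}\cap\{d\}$ plus $|N_G(b)\cap N_G(c)\cap N_G(d)\cap S|$.
\begin{itemize}
\item For distinct $b,c,d$, the $I$-part is $0$, so the condition is exactly the triplet condition of 2-incidence.
\item For $b=c\neq d$, the sum is $|N_G(b)\cap N_G(d)\cap S|$, which gives the pair condition.
\item For $b=c=d$, the sum is $1+\deg(b)$, so evenness means $\deg(b)$ is odd.
\end{itemize}
So I interpret "any three rows" as allowing repetitions; I will state this reading explicitly, since it is what makes the equivalence work. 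Under it, the third bullet is equivalent to 2-incidence of $S$ together with odd degree for every vertex of $V\sm S$.

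Second, the column weights. The $I$-columns have weight $1$, which is automatically odd. The column of $a\in S$ has weight $\deg(a)$. So the first bullet is equivalent to odd degree on $S$.

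Third, repeated columns. Two $A$-columns coincide iff $N_G(a)=N_G(a')$, i.e. $a,a'\in S$ are twins. In a bipartite graph with $S$ independent, twins in $S$ are exactly vertices of $S$ with equal neighbourhoods, and likewise for $V\sm S$. An $A$-column equals an $I$-column $e_b$ iff $\deg(a)=1$. Two $I$-columns never coincide.

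The main obstacle is reconciling the remaining parts of properties 2 and 3 with these matrix conditions:
\begin{itemize}
\item Property 2 also requires degree at least $3$. A degree-$1$ vertex $a\in S$ is excluded by distinct columns. A vertex of $S$ of degree $0$ has an all-zero column, which has even weight and is excluded by the first bullet. For $b\in V\sm S$, odd degree rules out $0$, and the difficulty is ruling out $\deg(b)=1$.
\item Property 3 also excludes twins inside $V\sm S$ and twins across the bipartition, i.e. adjacent twins $a\sim b$ with $N(a)\sm\{b\}=N(b)\sm\{a\}$. Bipartiteness forces both such sets to be empty, since a common neighbour would have to lie on both sides. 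So adjacent twins are an isolated edge, which degree at least $3$ excludes.
\end{itemize}

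The step I expect to be genuinely delicate is twins in $V\sm S$ and degree-$1$ vertices in $V\sm S$. These are row conditions, not column conditions, so I would try to derive them from the triorthogonality bullet, using pair products $r_b r_c$. If $b,c$ are twins with $N=N(b)=N(c)$, the condition with rows $(b,c,c)$ already gives $|N|$ even, contradicting odd degree. That leaves the case $\deg(b)=1$. Here I would check whether it is forced, e.g. via the triple $(b,c,d)$ with $c$ chosen among the other neighbours of $b$'s unique neighbour $a$, combined with distinct columns and odd weights. If it is not forced, I would note that the intended statement must use degree $\geq 3$ only on $S$, or add the row condition, and adjust accordingly.
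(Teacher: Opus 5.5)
Your translation of the three matrix conditions is correct, and it is the paper's route. The paper calls the forward direction obvious. For the converse it reads off odd degree everywhere, no degree-$1$ vertices and no twins in $S$, and 2-incidence of $S$. It then finishes with its lemma on bipartite graphs with all degrees odd, which rules out degree-$1$ vertices and twins in $V\sm S$. Your twin argument with rows $(r_b,r_c,r_c)$ is exactly the first half of that lemma. Reading ``any three rows'' as allowing repetitions is also what the paper intends; it says so explicitly.

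The gap is the second half of that lemma. You leave the case $\deg(b)=1$ for $b\in V\sm S$ unresolved and even consider weakening the statement. No amendment is needed: this case is forced by the conditions. Suppose $N_G(b)=\{a\}$. The column of $a$ has odd weight and must differ from the unit column $e_b$, so $\deg(a)\neq 1$, hence $\deg(a)\geq 3$. So $a$ has a neighbour $c\neq b$ in $V\sm S$. The rows $(r_b,r_c,r_c)$ then give $\sum_i r_b^i r_c^i r_c^i=|N_G(b)\cap N_G(c)\cap S|=|\{a\}|=1$, which is odd, a contradiction. Your triple $(r_b,r_c,r_d)$ also works if you take $c,d$ to be two distinct further neighbours of $a$, which exist since $\deg(a)\geq 3$.

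With this, property 2 holds on $V\sm S$ and the rest of your argument goes through. One small omission: a non-adjacent pair $a\in S$, $b\in V\sm S$ can only be twins if $N_G(a)=N_G(b)=\emptyset$, and odd degree already excludes that.
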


Note that the three rows $r_1, r_2, r_3$ need not be distinct. The proof of \cref{lemma:matrix} is provided in the appendix. 
The $\mathbb F_2$-subspace generated by the rows of a matrix satisfying the properties mentioned in \cref{lemma:matrix} is called a \textit{unital triorthogonal subspace} \cite{nezami2022}. More precisely, the triorthogonality corresponds to the property where for any three rows $r_1, r_2, r_3$, $\sum_{i \in [1,k]} {r_1}^i {r_2}^i {r_3}^i = 0 \bmod 2$, and unital means that the subspace contains the all-1 vector, which is true here as the all-1 vector is nothing but the sum of all rows, as each column has odd Hamming weight. Small unital triorthogonal subspaces were classified in \cite{nezami2022}, making use of a strong connection with Reed-Muller codes. There is no canonical choice of a generator matrix for a given unital triorthogonal subspace, however the row operations give all possible generator matrices. Moreover, column permutations generate the isomorphism class of the unital triorthogonal subspace. Note that generator matrices are supposed to have full row-rank, thus the generator matrix of a unital triorthogonal subspace can always be put into the form $M_G$ with row operations and columns permutations.

In the graphical picture, pivoting and vertex permutation correspond to these row operations and columns permutations. More precisely, $M_{G'}$ can be obtained from $M_{G}$ by row operations and columns permutations if and only if $G$ and $G'$ are related by pivotings and vertex permutations \cite{danielsen2008edge}. According to \cref{lemma:matrix}, this proves that properties 2-4 are stable by pivoting and vertex permutation, as well as the following equivalence. 

\begin{lemma}
    There is a one-to-one correspondence between classes of bipartite graphs satisfying properties 2-4 up to pivoting and vertex permutation, and isomorphism classes of unital triorthogonal subspaces.
\end{lemma}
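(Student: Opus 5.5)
The plan is to build the map explicitly and check it is well defined, injective and surjective. To a bipartite graph $G$ (bipartition $S, V\sm S$) satisfying properties 2-4, assign the row space $\mathcal C_G \se \mathbb F_2^{|V|}$ of $M_G=[~I~|~A~]$. By \cref{lemma:matrix}, the rows of $M_G$ have pairwise and triple products of even weight. Triorthogonality is a condition on generators only: the product $x\cdot y\cdot z$ is trilinear over $\mathbb F_2$, so evenness of $|r_1 r_2 r_3|$ for all (not necessarily distinct) rows gives evenness of $|xyz|$ for all $x,y,z\in\mathcal C_G$. Here I would note that taking $r_1=r_2=r_3$ and $r_1=r_2$ covers the single-row and pairwise conditions, using $r^2=r$. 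The all-ones vector is the sum of all rows, since every column has odd weight. Hence $\mathcal C_G$ is a unital triorthogonal subspace. The map sends the class of $G$ to the isomorphism class of $\mathcal C_G$.

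Next, well-definedness. I will invoke the cited fact from \cite{danielsen2008edge}: $M_{G'}$ is obtained from $M_G$ by row operations and column permutations if and only if $G$ and $G'$ are related by pivotings and vertex permutations. Row operations preserve the row space and column permutations produce an isomorphic subspace, so related graphs give isomorphic subspaces. Injectivity is the converse. Suppose $\mathcal C_G$ and $\mathcal C_{G'}$ are isomorphic via a coordinate permutation $\sigma$. Then $M_{G'}$ and the column-permuted $M_G\sigma$ are two full-row-rank generator matrices of the same space. They therefore differ by an invertible row transformation, and the ``only if'' direction of the cited fact gives that $G,G'$ are related by pivotings and vertex permutations.

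For surjectivity, take any unital triorthogonal subspace with full-row-rank generator matrix $M$. Gaussian elimination and a column permutation bring $M$ to the form $[~I~|~A~]$. Read $A$ as a biadjacency matrix between $V\sm S$ (rows) and $S$ (columns), which defines $G$. It remains to check the three conditions of \cref{lemma:matrix}, which then give properties 2-4.
\begin{itemize}
\item Triorthogonality of the rows is inherited from the subspace.
\item For odd column weight, write the all-ones vector as $\sum_i \lambda_i r_i$. The identity block forces $\lambda_i=1$ for all $i$, so every column sums to $1$.
\item No repeated columns is the delicate point.
\end{itemize}

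I expect the last condition to be the main obstacle, because a general unital triorthogonal subspace might have repeated coordinates (equivalently, a weight-2 dual codeword), and these would give twins in $G$. I would resolve this by fixing the convention from \cite{nezami2022}, where classified subspaces are taken without repeated coordinates (or repeated pairs are removed, which preserves triorthogonality and unitality by parity), and state the correspondence with that convention. Repeated columns are invariant under row operations and column permutations, so the condition is a property of the isomorphism class, and the correspondence restricts consistently on both sides.
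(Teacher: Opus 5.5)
Your proposal is correct and takes essentially the paper's route: $G$ is sent to the row space of $M_G$, the cited correspondence between pivoting/vertex permutation and row operations/column permutations gives well-definedness and injectivity, and surjectivity comes from bringing any full-row-rank generator matrix to the form $[~I~|~A~]$. The repeated-column issue you flag is handled in the paper by definition: it calls a unital triorthogonal subspace the row space of a matrix satisfying the conditions of \cref{lemma:matrix}, which already excludes repeated columns. Your trilinearity check and your argument that unitality forces odd column weights in standard form make explicit details that the paper leaves implicit.
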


Our strategy to prove \cref{thm:lulc26} is thus to show that if a unital triorthogonal subspace corresponds to graphs on up to 27 vertices, then none of these graphs satisfies property 5. Fortunately, property 5 is stable by pivoting for bipartite graphs satisfying properties 2-4, which we prove in the appendix (see \cref{lemma:pivoting_preserves_properties}), implying that we only need to check a single representative graph for each unital triorthogonal subspace.

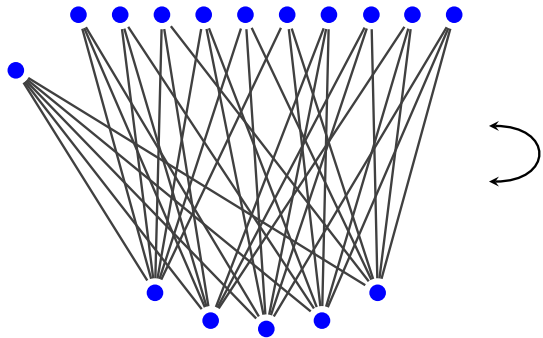
\begin{figure}[htbp]
\centering
\scalebox{\scalefigures}{
\begin{tikzpicture}
\begin{scope}[every node/.style={circle, fill=blue, inner sep=3pt}]

    \newcommand{\drawspacing}{0.75}

    \node (u1) at (0,0) {};
    \node (u2) at (1,-0.5) {};
    \node (u3) at (2,-0.65) {};
    \node (u4) at (3,-0.5) {};
    \node (u5) at (4,0) {};

    \node (v123) at (2-4.5*\drawspacing,5) {};
    \node (v124) at (2-3.5*\drawspacing,5) {};
    \node (v125) at (2-2.5*\drawspacing,5) {};
    \node (v134) at (2-1.5*\drawspacing,5) {};
    \node (v135) at (2-0.5*\drawspacing,5) {};
    
    \node (v145) at (2+0.5*\drawspacing,5) {};
    \node (v234) at (2+1.5*\drawspacing,5) {};
    \node (v235) at (2+2.5*\drawspacing5,5) {};
    \node (v245) at (2+3.5*\drawspacing,5) {};
    \node (v345) at (2+4.5*\drawspacing,5) {};
    
    \node (v12345) at (-2.5,4) {};

\end{scope}
\begin{scope}[every edge/.style={draw=darkgray,very thick, shorten <=3pt, shorten >=3pt}]
    
    \path [-] (v123) edge node {} (u1);
    \path [-] (v123) edge node {} (u2);
    \path [-] (v123) edge node {} (u3);
    
    \path [-] (v124) edge node {} (u1);
    \path [-] (v124) edge node {} (u2);
    \path [-] (v124) edge node {} (u4);
    
    \path [-] (v125) edge node {} (u1);
    \path [-] (v125) edge node {} (u2);
    \path [-] (v125) edge node {} (u5);
    
    \path [-] (v134) edge node {} (u1);
    \path [-] (v134) edge node {} (u3);
    \path [-] (v134) edge node {} (u4);
    
    \path [-] (v135) edge node {} (u1);
    \path [-] (v135) edge node {} (u3);
    \path [-] (v135) edge node {} (u5);

    \path [-] (v145) edge node {} (u1);
    \path [-] (v145) edge node {} (u4);
    \path [-] (v145) edge node {} (u5);
    
    \path [-] (v234) edge node {} (u2);
    \path [-] (v234) edge node {} (u3);
    \path [-] (v234) edge node {} (u4);
    
    \path [-] (v235) edge node {} (u2);
    \path [-] (v235) edge node {} (u3);
    \path [-] (v235) edge node {} (u5);
    
    \path [-] (v245) edge node {} (u2);
    \path [-] (v245) edge node {} (u4);
    \path [-] (v245) edge node {} (u5);
    
    \path [-] (v345) edge node {} (u3);
    \path [-] (v345) edge node {} (u4);
    \path [-] (v345) edge node {} (u5);
    
    \path [-] (v12345) edge node {} (u1);
    \path [-] (v12345) edge node {} (u2);
    \path [-] (v12345) edge node {} (u3);
    \path [-] (v12345) edge node {} (u4);
    \path [-] (v12345) edge node {} (u5);

\end{scope}
\draw[stealth-stealth, draw=black, very thick] (6,2) to[bend right=90, looseness = 3] (6,3);
\end{tikzpicture}
}
\caption{One of the 16-vertex bipartite graphs corresponding to the unique unital triorthogonal subspace in $\mathbb F^{16}_2$, from which the original magic state distillation code by Bravyi and Kitaev \cite{Bravyi2005} and the first code of the Bravyi-Haah family \cite{Bravyi2012} can be derived \cite{nezami2022}. We name the bottom vertices 1, 2, 3, 4 and 5, and we name the top vertices after their neighborhood. One of the top vertices is adjacent to every bottom vertex: $\{1,2,3,4,5\}$. The $\binom{5}{3} = 10$ other top vertices are all possible vertices of degree 3: $\{1,2,3\}$, $\{1,2,4\}$, $\{1,2,5\}$, $\{1,3,4\}$, $\{1,3,5\}$, $\{1,4,5\}$, $\{2,3,4\}$, $\{2,3,5\}$, $\{2,4,5\}$ and $\{3,4,5\}$. The set of top vertices is 2-incident. A 2-local complementation on the top vertices leaves the graph invariant, i.e. does not create or remove any edge. Thus, no counterexample to the LU-LC conjecture can be derived from this graph.}
\label{fig:candidate16}
\end{figure}

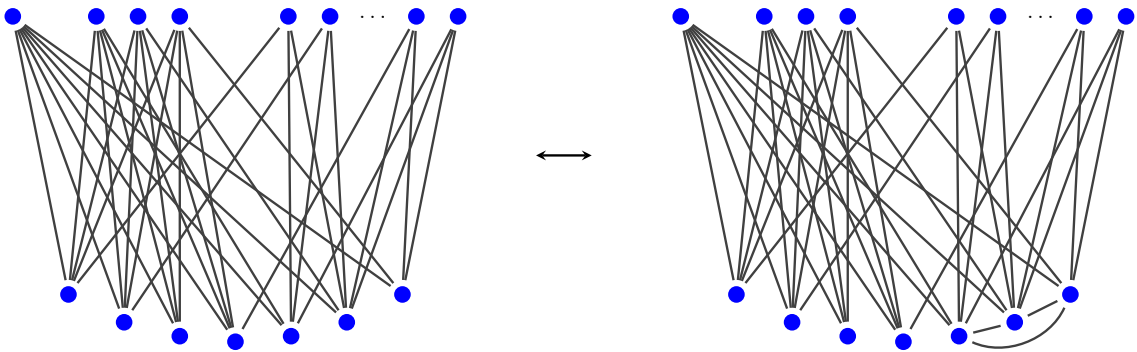
\begin{figure*}[htbp]
\centering
\scalebox{\scalefigures}{
\begin{tikzpicture}
\begin{scope}[every node/.style={circle, fill=blue, inner sep=3pt}]

    \node (u1) at (0,0) {};
    \node (u2) at (1,-0.5) {};
    \node (u3) at (2,-0.75) {};
    \node (u4) at (3,-0.85) {};
    \node (u5) at (4,-0.75) {};
    \node (u6) at (5,-0.5) {};
    \node (u7) at (6,0) {};

    \node (v1234567) at (-1,5) {};
    
    \node (v12345) at (0.5,5) {};
    \node (v12346) at (1.25,5) {};
    \node (v12347) at (2,5) {};
    
    \node (v156) at (3.95,5) {};
    \node (v256) at (4.7,5) {};
    \node (v467) at (6.25,5) {};
    \node (v567) at (7,5) {};
    
\end{scope}
\begin{scope}[every edge/.style={draw=darkgray,very thick, shorten <=3pt, shorten >=3pt}]
    
    \path [-] (v1234567) edge node {} (u1);
    \path [-] (v1234567) edge node {} (u2);
    \path [-] (v1234567) edge node {} (u3);
    \path [-] (v1234567) edge node {} (u4);
    \path [-] (v1234567) edge node {} (u5);
    \path [-] (v1234567) edge node {} (u6);
    \path [-] (v1234567) edge node {} (u7);
    
    \path [-] (v12345) edge node {} (u1);
    \path [-] (v12345) edge node {} (u2);
    \path [-] (v12345) edge node {} (u3);
    \path [-] (v12345) edge node {} (u4);
    \path [-] (v12345) edge node {} (u5);
    
    \path [-] (v12346) edge node {} (u1);
    \path [-] (v12346) edge node {} (u2);
    \path [-] (v12346) edge node {} (u3);
    \path [-] (v12346) edge node {} (u4);
    \path [-] (v12346) edge node {} (u6);
    
    \path [-] (v12347) edge node {} (u1);
    \path [-] (v12347) edge node {} (u2);
    \path [-] (v12347) edge node {} (u3);
    \path [-] (v12347) edge node {} (u4);
    \path [-] (v12347) edge node {} (u7);
    
    \path [-] (v156) edge node {} (u1);
    \path [-] (v156) edge node {} (u5);
    \path [-] (v156) edge node {} (u6);
    
    \path [-] (v256) edge node {} (u2);
    \path [-] (v256) edge node {} (u5);
    \path [-] (v256) edge node {} (u6);
    
    \path [-] (v467) edge node {} (u4);
    \path [-] (v467) edge node {} (u6);
    \path [-] (v467) edge node {} (u7);
    
    \path [-] (v567) edge node {} (u5);
    \path [-] (v567) edge node {} (u6);
    \path [-] (v567) edge node {} (u7);

\end{scope}
\draw (5.5,5) node(){\large\dots};
\begin{scope}[shift={(0.4,0)}]
\draw [stealth-stealth,draw=black,very thick ](8,2.5) -- (9,2.5);
\end{scope}
\begin{scope}[shift={(12,0)}]
\begin{scope}[every node/.style={circle, fill=blue, inner sep=3pt}]

    \node (u1) at (0,0) {};
    \node (u2) at (1,-0.5) {};
    \node (u3) at (2,-0.75) {};
    \node (u4) at (3,-0.85) {};
    \node (u5) at (4,-0.75) {};
    \node (u6) at (5,-0.5) {};
    \node (u7) at (6,0) {};

    \node (v1234567) at (-1,5) {};
    
    \node (v12345) at (0.5,5) {};
    \node (v12346) at (1.25,5) {};
    \node (v12347) at (2,5) {};
    
    \node (v156) at (3.95,5) {};
    \node (v256) at (4.7,5) {};
    \node (v467) at (6.25,5) {};
    \node (v567) at (7,5) {};  
    
\end{scope}
\begin{scope}[every edge/.style={draw=darkgray,very thick, shorten <=3pt, shorten >=3pt}]
              
    \path [-] (v1234567) edge node {} (u1);
    \path [-] (v1234567) edge node {} (u2);
    \path [-] (v1234567) edge node {} (u3);
    \path [-] (v1234567) edge node {} (u4);
    \path [-] (v1234567) edge node {} (u5);
    \path [-] (v1234567) edge node {} (u6);
    \path [-] (v1234567) edge node {} (u7);
    
    \path [-] (v12345) edge node {} (u1);
    \path [-] (v12345) edge node {} (u2);
    \path [-] (v12345) edge node {} (u3);
    \path [-] (v12345) edge node {} (u4);
    \path [-] (v12345) edge node {} (u5);
    
    \path [-] (v12346) edge node {} (u1);
    \path [-] (v12346) edge node {} (u2);
    \path [-] (v12346) edge node {} (u3);
    \path [-] (v12346) edge node {} (u4);
    \path [-] (v12346) edge node {} (u6);
    
    \path [-] (v12347) edge node {} (u1);
    \path [-] (v12347) edge node {} (u2);
    \path [-] (v12347) edge node {} (u3);
    \path [-] (v12347) edge node {} (u4);
    \path [-] (v12347) edge node {} (u7);
    
    \path [-] (v156) edge node {} (u1);
    \path [-] (v156) edge node {} (u5);
    \path [-] (v156) edge node {} (u6);
    
    \path [-] (v256) edge node {} (u2);
    \path [-] (v256) edge node {} (u5);
    \path [-] (v256) edge node {} (u6);
    
    \path [-] (v467) edge node {} (u4);
    \path [-] (v467) edge node {} (u6);
    \path [-] (v467) edge node {} (u7);
    
    \path [-] (v567) edge node {} (u5);
    \path [-] (v567) edge node {} (u6);
    \path [-] (v567) edge node {} (u7);

    \path [-] (u5) edge node {} (u6);
    \path [-] (u6) edge node {} (u7);
    \path [-] (u5) edge[bend right=45] node {} (u7);

\end{scope}
\draw (5.5,5) node(){\large\dots};
\end{scope}

\end{tikzpicture}
}
\caption{The leftmost graph is one of the 24-vertex bipartite graphs corresponding to the unique unital triorthogonal subspace in $\mathbb F^{24}_2$, from which the second code of the Bravyi-Haah family \cite{Bravyi2012} can be derived \cite{nezami2022}. We name the bottom vertices 1, 2, 3, 4, 5, 6 and 7, and we name the top vertices after their neighborhood. One of the top vertices is adjacent to every bottom vertex: $\{1,2,3,4,5,6,7\}$. 3 top vertices are of degree 5: $\{1,2,3,4,5\}$, $\{1,2,3,4,6\}$ and $\{1,2,3,4,7\}$. The $\binom{3}{2}\times\binom{4}{1} + \binom{3}{3} = 3 \times 4 +1 = 11$ other top vertices are of degree 3, they are all possible vertices of degree 3 with 2 or more neighbors among the vertices 4, 5 and 6: $\{1,5,6\}$, $\{2,5,6\}$, $\{3,5,6\}$, $\{4,5,6\}$, $\{1,5,7\}$, $\{2,5,7\}$, $\{3,5,7\}$, $\{4,5,7\}$, $\{1,6,7\}$, $\{2,6,7\}$, $\{3,6,7\}$, $\{4,6,7\}$ and $\{5,6,7\}$. The set of top vertices is 2-incident. A 2-local complementation on the top vertices maps the leftmost graph to the rightmost graph by creating 3 edges: one between 5 and 6, one between 5 and 7, and one between 6 and 7. This transformation can also be obtained with a single local complementation on the vertex $\{5,6,7\}$. Thus, no counterexample to the LU-LC conjecture can be derived from these graphs.
}
\label{fig:candidate24}
\end{figure*}

Surprisingly, up to isomorphism, there are only two unital triorthogonal subspaces that correspond to graphs on up to 27 vertices \cite{nezami2022}. The first one corresponds to a graph on 16 vertices (see Figure \ref{fig:candidate16}), and the second corresponds to a graph on 24 vertices (see Figure \ref{fig:candidate24}). As expected, these two graphs satisfy properties 2-4. However, it is easy to see that no counterexample to the LU-LC conjecture can be derived from these two graphs,  as they do not satisfy property 5. Indeed, in the 16-vertex graph, the 2-local complementation leaves the graph invariant, and in the 24-vertex graph, the 2-local complementation can be implemented with local complementations. This proves \cref{thm:lulc26}, that is, the 27-qubit counterexample to the LU-LC conjecture is minimal.

The 28-qubit counterexample to the LU-LC conjecture (see Figure \ref{fig:ce28}) is recovered from one of the two unital triorthogonal subspaces in $\mathbb F^{28}_2$.

\vspace{5pt}

~\paragraph{Discussion.}

When the LU-LC conjecture was disproved \cite{Ji07}, two main questions remained open: whether the 27-qubit counterexample is minimal, and the existence of an efficient algorithm to decide when two graph states are LU-equivalent. These questions were both addressed with the $r$-local complementation formalism. This present paper gives a final answer to the first question. The second question was partially answered \cite{claudet2025deciding}, but remains open for the time being.

\vspace{14pt}

While the question of the minimal counterexample to the LU-LC conjecture is now closed, the existence of an infinite strict hierarchy of local equivalences between graph states \cite{claudet2024local} paves the way for a generalization of this question. When $r$ is fixed, $r$-local complementation exactly captures when graph states are equivalent up to local unitaries in the level $r+1$ of the so-called Clifford hierarchy \cite{claudet2024local}. It is known that for any $r\gs 2$, there exist graphs related by $r$-local complementations but not $(r-1)$-local complementations. For any integer $r \gs 1$, we may define a number $c(r)$ that is the number of vertices of a minimal pair of graphs related by $r$-local complementations but not $(r-1)$-local complementations, equivalently the number of qubits of a minimal pair of graphs states equivalent up to local unitaries in the level $r+1$ but not $r$ of the Clifford hierarchy. While 0-local complementation is not properly defined, local unitaries in the level 1 of the Clifford hierarchy are Pauli strings, which may only map a given graph state to itself. Thus, we set $c(1)=3$, as $\ket{K_3}$ and $\ket{P_3}$ (where $K_3$ is the triangle, and $P_3$ is the path of length 3) are LC-equivalent but not the same graph state. This present paper proves $c(2)=27$, but no value of $c(r)$ when $r \gs 3$ is currently known. We have however lower and upper bounds, $c(r) \gs 2^{r+2}$ \cite{claudet2025deciding} (when $r \gs 2$) and $c(r) = O(2^{r^2})$ (an exact upper bound is presented in \cite{claudet2024local}). While $c$ grows at least exponentially, the upper bound does not forbid super-exponential growth, thus the question of the growth of $c$ is currently open. Note that $c(r) = \Theta(2^{r^2})$ would imply that LU-equivalence of graph states can be recognized in time $n^{\sqrt{\log_2(n)} + O(1)}$ rather than $n^{\log_2(n) + O(1)}$  (where $n$ is the number of qubits) \cite{claudet2025deciding}. 

\vspace{14pt}

The main result of this paper was obtained through a connection between 2-local complementation and triorthogonal codes. This connection is not surprising, as both are strongly related to the third level of the Clifford hierarchy.  
Many works focus on codes admitting transversal gates at higher levels of the Clifford hierarchy \cite{zeng2011transversality,anderson2014classification,haah2018towers,rengaswamy2020optimality, hu2025climbing,golowich2025quantum,xu2026controlled}: investigating potential connections with $r$-local complementation is a promising direction.

~\paragraph{Acknowledgments.}

We thank Lina Vandré and Piotr Mitosek for discussions on early ideas toward proving this result. We thank Hans Briegel and Robert Raussendorf for interesting discussions on the history of graph states and the LU-LC conjecture. This research was funded by the Austrian Science Fund (FWF) [SFB BeyondC F7102, DOI: 10.55776/F71]. For open access purposes, the authors have applied a CC BY public copyright license to any author accepted manuscript version arising from this submission.

\bibliographystyle{unsrturl}
\bibliography{ref} 

\appendix

\section{Proofs}

We begin by giving some technical lemmata that will be useful for the proof of \cref{lemma:reduction}. \cref{lemma:angle_is_0}, \cref{lemma:1lc_Xrotation} and \cref{lemma:2lc_Xrotation} help simplify proofs involving graph states, informally they show that X-rotations alone define local transformations between graph states. \cref{lemma:pivoting_preserves_properties} shows that properties 4-5 are invariant under pivoting. \cref{lemma:all_odd_after_pivoting} shows how pivoting can be used to obtain bipartite graphs with only vertices of odd degree. \cref{lemma:notwins_nodeg1} presents some properties of bipartite graphs with only vertices of odd degree. 

\begin{lemma}\label{lemma:angle_is_0}
    Given a graph state $\ket{G}$, if $\bigotimes_{v\in V}Z\left(\theta_v\right) \ket{G}$ is also a graph state $\ket{G'}$, then for every $v \in V$, $\theta_v = 0 \bmod 2\pi$ i.e. $G' = G$.
\end{lemma}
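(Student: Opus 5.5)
Since $Z$-rotations are diagonal in the computational basis, I will compare amplitudes on computational basis states. Expanding the definition $\ket G = \prod_{\{u,v\}\in E} CZ_{uv}\ket{+}_V$ gives
$$\ket G = 2^{-|V|/2}\sum_{x\in\{0,1\}^V} (-1)^{\sum_{\{u,v\}\in E} x_u x_v}\ket x,$$
so every graph state has amplitudes of modulus $2^{-|V|/2}$ on all basis states, with signs given by the quadratic form of the graph. Applying $\bigotimes_v Z(\theta_v)$ multiplies the amplitude of $\ket x$ by $e^{i\sum_v \theta_v x_v}$.

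Assume $\bigotimes_v Z(\theta_v)\ket G = \ket{G'}$, allowing a global phase $e^{i\varphi}$ as is standard when identifying states. I then obtain, for every $x$,
$$e^{i\sum_v\theta_v x_v}(-1)^{q_G(x)} = e^{i\varphi}(-1)^{q_{G'}(x)},$$
where $q_G(x)=\sum_{\{u,v\}\in E}x_ux_v$. Taking $x=0$ gives $e^{i\varphi}=1$, since both quadratic forms vanish there. Next I take $x=e_v$, the indicator of a single vertex $v$. Both quadratic forms vanish on $e_v$ because there are no loops, so $e^{i\theta_v}=1$, i.e. $\theta_v = 0 \bmod 2\pi$ for every $v$.

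With all angles trivial, the operator is the identity and $\ket{G'}=\ket G$. Because the correspondence between graphs and graph states is one-to-one, $G'=G$. Equivalently, once $\theta=0$ the amplitude identity becomes $q_G(x)=q_{G'}(x) \bmod 2$ for all $x$, and evaluating at $x=e_u+e_v$ recovers each edge indicator, so the edge sets coincide.

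I expect no real obstacle here. The only care needed is the global phase convention: the paper defines graph states only up to global phase, and the $x=0$ amplitude is exactly what pins that phase down.
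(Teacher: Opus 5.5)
Your proof is correct and follows the paper's argument. The paper also compares amplitudes on the basis state with a single $1$ at vertex $a$: it applies $\bra{0}_{V\sm a}\bra{1}_a$ to both sides, where any graph state has amplitude $2^{-|V|/2}$ and the rotated state picks up the factor $e^{i\theta_a}$. Two points the paper leaves implicit are made explicit in your version: using the $x=0$ amplitude to rule out a global phase, and recovering the edge set from $x=e_u+e_v$.
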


\begin{proof}
    For any $a \in V$, $ \bra{0}_{V\sm a}\ket{G'} = \frac{1}{\sqrt{2^{|V|}}}\left(\ket 0_a + \ket 1_a\right)$ (see for example \cite{claudet2025localequivalencesgraphstates}), thus $\bra{0}_{V\sm a}\bra{1}_a\ket{G'} = \frac{1}{\sqrt{2^{|V|}}}$. Also, $$\bra{0}_{V\sm a}\bra{1}_a \bigotimes_{v\in V}Z\left(\theta_v\right)\ket{G} = \frac{ e^{i \theta_a}}{\sqrt{2^{|V|}}}$$
\end{proof}

\begin{lemma}\label{lemma:1lc_Xrotation}
    Given a graph $G$ and an independent set A, 
    if there exist angles $\theta_v$ such that 
    $\bigotimes_{u\in A}X\left(\frac {\pi}{2}\right) \bigotimes_{v\in V \sm A}Z\left(\theta_v\right)\ket{G}$ 
    is a graph state, then the angles $\theta_v$ are uniquely defined, multiples of $\frac{\pi}{2}$ and $\bigotimes_{u\in A}X\left(\frac {\pi}{2}\right) \bigotimes_{v\in V \sm A}Z\left(\theta_v\right)\ket{G} = \ket{G \star^1 A}$.
\end{lemma}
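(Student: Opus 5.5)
We reduce to the known implementation of a 1-local complementation and then use \cref{lemma:angle_is_0} to pin down the remaining $Z$-rotations. Since $A$ is independent, local complementations on its vertices commute, and $\ket{G \star^1 A}$ is obtained from $\ket G$ by applying, for each $u\in A$, the Clifford $X(\frac{\pi}{2})_u Z(-\frac{\pi}{2})_{N(u)}$ in the current graph. Because $A$ is independent, the $X(\frac{\pi}{2})$ rotations only ever act on vertices of $A$. Local complementation on $u\in A$ only toggles edges inside $N_G(u)\subseteq V\sm A$. Hence the neighbourhood in $A$ of any vertex $v\notin A$ does not change along the sequence, and each $u\in A$ keeps its neighbourhood $N_G(u)$.

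We then commute everything into a canonical form. The $Z$-rotations on $V\sm A$ commute with the $X$-rotations on $A$ (they act on different qubits), so we obtain
$$\ket{G\star^1 A} = \bigotimes_{u\in A}X\left(\tfrac{\pi}{2}\right)\bigotimes_{v\in V\sm A}Z\left(-\tfrac{\pi}{2}|N_G(v)\cap A|\right)\ket G .$$
Write $\phi_v = -\frac{\pi}{2}|N_G(v)\cap A|$. This already gives existence with angles that are multiples of $\frac{\pi}{2}$.

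For uniqueness, suppose $\ket{G'} = \bigotimes_{u\in A}X(\frac{\pi}{2})\bigotimes_{v\in V\sm A} Z(\theta_v)\ket G$ is a graph state. Applying the inverse operators $X(-\frac{\pi}{2})$ on $A$ gives
$$\bigotimes_{u\in A}X\left(-\tfrac{\pi}{2}\right)\ket{G'} = \bigotimes_{v\in V \sm A}Z\left(\theta_v\right)\ket G .$$
Similarly, $\bigotimes_{u\in A}X(-\frac{\pi}{2})\ket{G\star^1 A} = \bigotimes_{v\in V \sm A}Z(\phi_v)\ket G$. Combining the two, and using that the $Z$'s commute with the $X$'s, we get
$$\ket{G'} = \bigotimes_{u\in A}X\left(\tfrac{\pi}{2}\right)\bigotimes_{v\in V \sm A}Z\left(\theta_v-\phi_v\right)\bigotimes_{u\in A}X\left(-\tfrac{\pi}{2}\right)\ket{G\star^1 A} = \bigotimes_{v\in V \sm A}Z\left(\theta_v-\phi_v\right)\ket{G\star^1 A}.$$
The last equality holds because the $Z$ factors act only on $V\sm A$. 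By \cref{lemma:angle_is_0}, applied to the graph state $\ket{G\star^1A}$, we get $\theta_v=\phi_v \bmod 2\pi$ for every $v$, and $G'=G\star^1 A$. This yields uniqueness, the multiple-of-$\frac{\pi}{2}$ property, and the identification of the resulting state.

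The main point needing care is the first step. We must justify that the angle $-\frac{\pi}{2}|N_G(v)\cap A|$ is computed with respect to the original $G$. We must also track global phases: \cref{lemma:angle_is_0} is stated up to global phase via amplitude ratios, so we should make sure the comparison still goes through with the phase present. One way is to apply its proof to the amplitude ratio between $\bra{0}_{V}$ and $\bra{0}_{V\sm a}\bra{1}_a$.
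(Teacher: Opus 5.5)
Your proof is correct and follows the paper's argument. Like the paper, you use the known Clifford implementation of $\star^1 A$ to rewrite the state as $\bigotimes_{v\in V\sm A} Z\left(\theta_v+\frac{\pi}{2}|N_G(v)\cap A|\right)\ket{G\star^1 A}$ and then conclude with \cref{lemma:angle_is_0}; the only differences are that you derive the multi-vertex implementation from the single-vertex one (the paper simply cites \cite{VandenNest04}) and that you handle the global phase explicitly by comparing against the $\bra{0}_V$ amplitude.
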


\begin{proof}
    It is known \cite{VandenNest04} that $$\bigotimes_{u\in A}X\left(\frac {\pi}{2}\right) \bigotimes_{v\in V \sm A}Z\left(-\frac {\pi}{2} |N_G(v) \cap A|\right)\ket{G} = \ket{G \star^1 A}$$
    Thus, 
    \begin{align*}
        &\bigotimes_{u\in A}X\left(\frac {\pi}{2}\right) \bigotimes_{v\in V \sm A}Z\left(\theta_v\right)\ket{G}\\
        =& \bigotimes_{v\in V \sm A}Z\left(\theta_v+\frac {\pi}{2} |N_G(v) \cap A|\right)\ket{G \star^1 A}
    \end{align*}
    is a graph state. \cref{lemma:angle_is_0} allows us to conclude. 
\end{proof}

\begin{lemma}\label{lemma:2lc_Xrotation}
    Given a graph $G$ and an independent set S, 
    there exist angles $\theta_v$ such that $\bigotimes_{u\in S}X\left(\frac {\pi}{4}\right) \bigotimes_{v\in V \sm S}Z\left(\theta_v\right)\ket{G}$ is a graph state if and only if $S$ is 2-incident. Also, in that case, the angles $\theta_v$ are uniquely defined, multiples of $\frac{\pi}{4}$ and $\bigotimes_{u\in S}X\left(\frac {\pi}{4}\right) \bigotimes_{v\in V \sm S}Z\left(\theta_v\right)\ket{G} = \ket{G \star^2 S}$.
\end{lemma}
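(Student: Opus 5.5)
We prove the two directions separately. The "if" direction is immediate from the Proposition of \cite{claudet2024local} recalled above. If $S$ is 2-incident, take $\theta_v = -\frac{\pi}{4}|N_G(v)\cap S|$; then $\bigotimes_{u\in S}X(\frac{\pi}{4})\bigotimes_{v\in V\sm S}Z(\theta_v)\ket G = \ket{G\star^2 S}$, which is a graph state.

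For uniqueness, suppose some other family $\theta'_v$ also yields a graph state $\ket{G'}$. Since $X$-rotations on $S$ and $Z$-rotations on $V\sm S$ act on disjoint qubits, they commute. Hence
$$\ket{G'} = \bigotimes_{v\in V\sm S}Z\left(\theta'_v-\theta_v\right)\ket{G\star^2 S}.$$
\cref{lemma:angle_is_0} then forces $\theta'_v=\theta_v \bmod 2\pi$ and $G'=G\star^2 S$. So, once 2-incidence holds, the angles are unique multiples of $\frac{\pi}{4}$ and the resulting state is $\ket{G\star^2 S}$. This mirrors the proof of \cref{lemma:1lc_Xrotation}.

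The "only if" direction is the real content. Assume $\ket{\psi}=\bigotimes_{u\in S}X(\frac{\pi}{4})\bigotimes_{v\in V\sm S}Z(\theta_v)\ket G$ is a graph state. The plan is to compute amplitudes explicitly. Write $\ket G=2^{-|V|/2}\sum_x(-1)^{q_G(x)}\ket x$, where $q_G(x)=\sum_{\{a,b\}\in E}x_ax_b$.

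Because $S$ is independent, conditioning on the $V\sm S$ bits $y$ factorizes the state as $\ket{y}\otimes\bigotimes_{u\in S}H\ket{\ell_u(y)}$ up to phases, where $\ell_u(y)=\sum_{v\in N(u)}y_v \bmod 2$. So on each $u\in S$ we apply $X(\frac{\pi}{4})H=HZ(\frac{\pi}{4})$ to $\ket{\ell_u(y)}$. This gives $\ket{\psi}=2^{-(|V\sm S|)/2}\sum_y e^{i\phi(y)}\ket{y}\otimes\ket{+\cdots}$ up to $H$ on $S$. The phase is
$$\phi(y)=\pi\, q_{G[V\sm S]}(y)+\sum_v\theta_v y_v+\frac{\pi}{4}\sum_{u\in S}\ell_u(y),$$
where $\ell_u(y)$ is read as an integer in $\{0,1\}$. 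A graph state with Hadamards on $S$ has all amplitudes $\pm$ of equal modulus, with the sign given by a polynomial of degree at most $2$ over $\mathbb F_2$. More precisely, applying $H_S$ back, $\ket\psi$ is a graph state only if the phase function $e^{i\phi(y)}$ is $(-1)^{\text{quadratic}}$ and the $S$-part still has the required structure. I will mainly use the first condition: $e^{i\phi}$ must take values $\pm1$ and be of the form $(-1)^{Q(y)}$ with $\deg Q\le 2$.

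The key step is to expand $\frac{\pi}{4}\sum_u [\ell_u(y)]$, i.e.\ the parity of $\ell_u(y)$ lifted to $\mathbb Z$, using the standard identity
$$[y_1\oplus\cdots\oplus y_k]=\sum_{\emptyset\ne T}(-2)^{|T|-1}\prod_{t\in T}y_t.$$
The coefficient of $\frac{\pi}{4}\prod_{t\in T}y_t$ is then $(-2)^{|T|-1}$ times the number of $u\in S$ adjacent to all of $T$. Modulo $2\pi$, only terms with $|T|\le 3$ survive, since $\frac{\pi}{4}\cdot 8=2\pi$. For $|T|=3$ the term is $\pi\cdot|N(a)\cap N(b)\cap N(c)\cap S|\,y_ay_by_c$. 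For $|T|=2$ it is $-\frac{\pi}{2}|N(a)\cap N(b)\cap S|\,y_ay_b$. For $|T|=1$ it is a linear term, which gets absorbed together with $\theta_v$.

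Requiring $e^{i\phi}$ to be $(-1)^{Q}$ with $\deg Q\le2$ means the multilinear expansion of $\phi/\pi$ must be integer-valued with no cubic term mod 2. Uniqueness of multilinear (Möbius) coefficients, obtained by evaluating on indicator vectors of small sets, then forces three conditions. The cubic counts must be even. The pairwise counts must be even, since otherwise an $i$-valued phase appears. The linear angles must make everything a multiple of $\pi$. These are exactly the 2-incidence conditions.

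The main obstacle I foresee is justifying that a graph state, after undoing the Hadamards on $S$, must have a $\pm1$, degree-$\le2$ phase on the $V\sm S$ slice. To handle this cleanly, I would first project qubits of $S$ onto $\bra{+}$ (equivalently, examine $H_S\ket\psi$ contracted with $\bra 0_S$). For a graph state, the resulting amplitudes on $V\sm S$ are a nonzero multiple of $(-1)^{Q(y)}$ with $Q$ quadratic. This uses the explicit amplitude formula together with an argument like the one in \cref{lemma:angle_is_0}, and the possibly nonzero global normalization must be tracked carefully.
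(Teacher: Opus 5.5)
\textbf{What matches the paper.} Your ``if'' direction and your uniqueness argument coincide with the paper's proof: once $S$ is 2-incident, the Proposition gives $\ket{G\star^2 S}$, and \cref{lemma:angle_is_0} pins down the angles. For the ``only if'' direction the paper proves nothing itself. It imports from \cite{claudet2025localequivalencesgraphstates} the equivalence between 2-incidence of $S$ and $\bigotimes_{u\in S}X(\frac{\pi}{4})\ket G$ being a graph state up to $Z$-rotations on $V\sm S$.

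\textbf{What is right in your alternative.} Your direct amplitude computation is a legitimate alternative route, and its algebra is correct. This covers the factorization through $\ell_u(y)$, the identity $X(\frac\pi4)H=HZ(\frac\pi4)$, the lifting identity, and the M\"obius bookkeeping. That bookkeeping correctly shows that ``$e^{i\phi}=(-1)^{Q}$ with $\deg Q\ls 2$'' is exactly the pair and triple parity conditions.

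\textbf{The gap.} The problem is the step you yourself flag as the main obstacle: the tool you propose for it cannot deliver the condition. Contracting $H_S\ket\psi$ with $\bra{0}_S$ (equivalently, $\ket\psi$ with $\bra{+}_S$) yields $2^{-|V\sm S|/2}\sum_{y:\,\ell(y)=0}e^{i\phi(y)}\ket y$. Only those $y$ with $\ell_u(y)=0$ for every $u\in S$ survive. For those $y$ the term $\frac\pi4\sum_u[\ell_u(y)]$ vanishes identically. So this slice carries no information about the common-neighbour counts. In particular, the indicator vectors $1_T$ with $|T|\ls 3$ on which your M\"obius inversion evaluates are in general not in $\ker\ell$.

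Your supporting claims are also false as stated. For a general graph state $\ket{G'}$, neither $H_S\ket{G'}$ nor $\bra{+}_S\ket{G'}$ has full support with equal moduli; both are supported on affine subspaces.

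\textbf{How to repair it.} What you need is the amplitude of $H_S\ket\psi$ at $(y,\ell(y))$ for \emph{every} $y$, not at a fixed value of the $S$-register.
\begin{itemize}
\item Apply the CNOT circuit $\ket y\ket s\mapsto\ket y\ket{s\oplus\ell(y)}$. It turns $H_S\ket\psi$ into $\ket\chi\otimes\ket 0_S$ with $\ket\chi=2^{-|V\sm S|/2}\sum_y e^{i\phi(y)}\ket y$.
\item If $\ket\psi$ is a graph state $\ket{G'}$ (up to global phase), then $\ket\chi$ is a full-support stabilizer state. It is real, because $H$, the CNOT circuit and $\ket{G'}$ are real and the amplitude of $\ket\chi$ at $y=0$ is positive.
\item Reality gives $e^{i\phi}\in\{\pm1\}$. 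By your integrality argument, this yields the pair condition.
\item The triple condition still needs a genuine stabilizer-structure argument. For example, a full-support stabilizer state has no nontrivial $Z$-type stabilizer. Hence for each $v$ some $\lambda_v X_vZ^{a_v}$ stabilizes $\ket\chi$. This makes every discrete derivative of $\phi/\pi \bmod 2$ affine, which forces $\deg Q\ls 2$.
\end{itemize}
With this supplied, your route gives a self-contained proof of what the paper takes from the literature.
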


\begin{proof}
    The following are equivalent \cite{claudet2025localequivalencesgraphstates}:
    \begin{itemize}
        \item $S$ is 2-incident;
        \item $\bigotimes_{u\in S}X\left(\frac {\pi}{4}\right)\bigotimes_{v\in V \sm S}Z\left(-\frac {\pi}{4} |N_G(v) \cap S|\right)\ket{G}$ is a graph state;
        \item $\bigotimes_{u\in S}X\left(\frac {\pi}{4}\right)\ket{G}$ is a graph state up to Z-rotations on $V \sm S$.
    \end{itemize}
    Thus, if $\bigotimes_{u\in S}X\left(\frac {\pi}{4}\right)\bigotimes_{v\in V \sm S}Z\left(\theta_v\right)\ket{G}$ is a graph state, then $S$ is 2-incident and 
    \begin{align*}
        &\bigotimes_{u\in S}X\left(\frac {\pi}{4}\right)\bigotimes_{v\in V \sm S}Z\left(\theta_v\right)\ket{G}\\
        =& \bigotimes_{v\in V \sm S}Z\left(\theta_v+\frac {\pi}{4} |N_G(v) \cap S|\right)\ket{G \star^2 S}
    \end{align*}
    is a graph state. \cref{lemma:angle_is_0} allows us to conclude.
\end{proof}

\begin{lemma}\label{lemma:pivoting_preserves_properties}
    Let $G$ be a graph bipartite with respect to a bipartition $S, V \sm S$ of the vertices where $S$ satisfies properties 4-5. Let $a \in S$ and $b \in V\sm S$ such that $a~\sim_G~b$.
    \begin{itemize}
        \item If $b$ has odd degree: let $G' = G \wedge ab$ and $S ' = S \Delta \{a,b\}$;
        \item If $b$ has even degree: let $G' = G \wedge ab - b$ and $S ' = S \sm \{a\}$.
    \end{itemize}
    Then, $G'$ is bipartite with respect to the bipartition $S', {V' \sm S'}$ of the vertices and $S'$ satisfies properties 4-5. 
\end{lemma}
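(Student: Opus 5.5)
The plan is to argue at the level of quantum states rather than graphs, using the characterisations in \cref{lemma:1lc_Xrotation} and \cref{lemma:2lc_Xrotation}, which express 1- and 2-local complementations purely through X-rotations on the independent set and Z-rotations elsewhere. Pivoting is implemented on a bipartite graph state by $H_aH_b$, and conjugation by $H$ exchanges X- and Z-rotations. So the transformation $\bigotimes_{u\in S}X(\frac{\pi}{4})\bigotimes_{v\notin S}Z(\theta_v)$ mapping $\ket G$ to $\ket{G\star^2 S}$ becomes, after conjugation by $H_aH_b$, a transformation of the same shape relative to $S'$ on the pivoted graph.

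\textbf{Odd-degree case.} Here $\ket{G'}=H_aH_b\ket G$ with $G'=G\wedge ab$, and $S'=S\Delta\{a,b\}$ is a side of the bipartition of $G'$, since pivoting swaps $a$ and $b$. Set
\[
U=\bigotimes_{u\in S}X\left(\tfrac{\pi}{4}\right)\bigotimes_{v\notin S}Z\left(-\tfrac{\pi}{4}|N_G(v)\cap S|\right).
\]
Then $H_aH_b\,U\,H_aH_b$ acts as $X(\frac{\pi}{4})$ on $b$ and on $S\sm\{a\}$, that is, on all of $S'$, and as Z-rotations on $a$ and on the rest of $V\sm S'$. It maps $\ket{G'}$ to $H_aH_b\ket{G\star^2 S}$.

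We now need $G\star^2 S$ to stay bipartite with respect to $(S,V\sm S)$, so that this last state is again a graph state. The toggled edges join vertices of $V\sm S$; toggling them amounts to Z-rotations and CZs, i.e.\ $CZ$'s between $V\sm S$ vertices. Rather than pivot $G\star^2S$ directly, I would argue that $H_aH_b\ket{G\star^2 S}$ is a graph state up to local Cliffords: $a\sim b$ persists in $G\star^2S$, so pivoting on $ab$ is defined there. Applying $H_aH_b$ gives $\ket{(G\star^2 S)\wedge ab}$ up to Pauli/Z corrections that arise because the graph is no longer bipartite. Absorbing those Clifford Z-corrections into the angles, \cref{lemma:2lc_Xrotation} shows that $S'$ is 2-incident in $G'$ (property 4). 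It also gives $G'\star^2 S' \equiv (G\star^2 S)\wedge ab$ up to local Clifford Z-rotations, which by \cref{lemma:angle_is_0} are exact equalities of graphs.

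For property 5, suppose $G'\star^2S'=G'\star^1A'$ for some $A'\se S'$. Then $X(\frac{\pi}{4})$ on $S'$ composed with the inverse of $X(\frac{\pi}{2})$ on $A'$ would be a product of X-rotations on $S'$ times Z's taking $\ket{G'}$ to itself. Conjugating back by $H_aH_b$ would produce $G\star^2 S=G\star^1 A$ with $A=A'\Delta\{\cdots\}$ adjusted for $a,b$, using \cref{lemma:1lc_Xrotation}. This contradicts property 5 for $G$.

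\textbf{Even-degree case.} Here $b$ remains in $S$'s complement, but after pivoting $b$ would land in $S'$ with even degree. So we delete it: $G'=G\wedge ab-b$ with $S'=S\sm\{a\}$. Deletion corresponds to measuring $b$ in the Z basis, and I would show that $X(\frac{\pi}{4})$ on $b$ acts trivially up to Z-rotations when $b$'s degree is even. Then the same conjugation argument, followed by projecting $b$ onto $\ket0$, transports both the 2-incidence and the non-implementability.

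The main obstacle is making this Clifford bookkeeping rigorous: verifying that conjugating by $H_aH_b$ lands exactly in the form required by \cref{lemma:2lc_Xrotation} (X-rotations only on $S'$, Z only off $S'$), and, in the even case, that the rotation on $b$ is harmless and that Z-measurement commutes with everything. Failing that, I would fall back on a direct combinatorial check of the parity conditions via the biadjacency description of pivoting.
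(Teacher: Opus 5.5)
Your overall strategy is the paper's: conjugate the operator implementing $\star^2 S$ by $H_aH_b$, then invoke \cref{lemma:2lc_Xrotation}, \cref{lemma:1lc_Xrotation} and \cref{lemma:angle_is_0}. But you mishandle the qubit $b$, and that is where the content of the lemma lies.

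After conjugation, $b$ carries $H_bZ(-\frac{\pi}{4}|N_G(b)|)H_b=X(-\frac{\pi}{4}|N_G(b)|)$, not $X(\frac{\pi}{4})$. For odd degree this is $X(\frac{\pi}{4})$ times a Clifford rotation $X(k\frac{\pi}{2})$. When $|N_G(b)|\equiv 1 \bmod 4$ the extra factor is $X(\pm\frac{\pi}{2})$, a genuine local complementation on $b$ that cannot be absorbed into Z-angles. So the identity you derive, $G'\star^2S'=(G\star^2S)\wedge ab$, is false.

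Example: take $S=\{a,x,y,z,w\}$ and $V\sm S=\{b,c\}$, with $a,x$ adjacent to both $b,c$ and $y,z,w$ adjacent to $b$ only. Then $S$ is 2-incident, $|N_G(b)|=5$ and $G\star^2S=G+bc$. A direct computation shows that $(G\star^2S)\wedge ab$ contains the edge $ac$. By contrast, $(G\wedge ab)\star^2S'=G\wedge ab$ does not, since the common neighbourhood of $a$ and $c$ in $S'$ is $\{b,y,z,w\}$.

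The missing idea is a correction term. The paper proves $(G\star^2S)\wedge ab=(G\wedge ab)\star^2S'\star^1B$ with $B\in\{\emptyset,\{b\}\}$. This uses that $S'\cup\{b\}$ is independent in $G\wedge ab$, so $\star^1B$ commutes with $\star^2S'$ and leaves all adjacencies to $S'$ (hence 2-incidence) untouched. For property 5 you must then carry $B$ along, obtaining $G\star^2S=(G\wedge ab)\star^1A'\star^1B\wedge ab$ before pulling back with \cref{lemma:1lc_Xrotation}. This is why the paper ends with $A\Delta A'\se\{a,b\}$.

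The even case has the same problem in a worse form. There the rotation on $b$ is $X(-\frac{\pi}{4}|N_G(b)|)$ with even degree. It is not an $X(\frac{\pi}{4})$ acting trivially up to Z-rotations. When $|N_G(b)|\equiv 2\bmod 4$ it is a local complementation on $b$, and the edges it toggles (among $N_{G\wedge ab}(b)=(N_G(a)\sm\{b\})\cup\{a\}$) survive the deletion of $b$. For instance, with $S=\{a,x\}$, $V\sm S=\{b,c\}$ and all four edges between them, $(G\star^2S)\wedge ab-b$ contains $ac$ while $G'\star^2S'$ does not.

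Projecting $b$ onto $\ket 0$ also fails. It does not commute with an X-rotation on $b$, and it points the wrong way for property 5. From a hypothetical $A'$ with $G'\star^2S'=G'\star^1A'$ you must lift back to $G$, which a measurement cannot do.

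The paper instead keeps $b$ and works in $G\wedge ab$. Since $N_{G\wedge ab}(b)\cap S'=\emptyset$, neither $\star^2S'$ nor $\star^1A'$ touches an edge at $b$. Hence $G'\star^2S'=G'\star^1A'$ if and only if $(G\wedge ab)\star^2S'=(G\wedge ab)\star^1A'$. Likewise, 2-incidence of $S'$ passes from $G\wedge ab$ to $G'$, because deleting a vertex outside $S'$ only removes constraints.
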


\begin{proof}
    In both cases (b of degree odd or even) we prove, using \cref{lemma:2lc_Xrotation}, that $S'$ is 2-incident in $G \wedge ab$ and that $G \star^2 S \wedge ab = G \wedge ab \star^2 S' \star^1 B$, where $B=\emptyset$ or $\{b\}$.
    \begin{align*}
    &\ket{G\star^2 S \wedge{ab}}\\
    =&H_a H_b \bigotimes_{u\in S}X\left(\frac {\pi}{4}\right)\bigotimes_{v\in V \sm S}Z\left(-\frac {\pi}{4} |N_G(v)|\right)\ket{G}\\
    =& \bigotimes_{u\in B}X\left(\frac {\pi}{2}\right)\bigotimes_{u\in S'}X\left(\frac {\pi}{4}\right)\bigotimes_{v\in V \sm (S' \cup B)}Z\left(\theta_v\right)H_a H_b \ket{G}\\
    =& \bigotimes_{u\in B}X\left(\frac {\pi}{2}\right)\bigotimes_{u\in S'}X\left(\frac {\pi}{4}\right)\bigotimes_{v\in V \sm (S' \cup B)}Z\left(\theta_v\right)\ket{G \wedge ab}\\
    =& \bigotimes_{u\in S'}X\left(\frac {\pi}{4}\right)\bigotimes_{v\in V \sm (S' \cup B)}Z\left(\widetilde{\theta_v}\right)\ket{G \wedge ab \star^1 B}\\
    =& \ket{G \wedge ab \star^1 B \star^2 S'} = \ket{G \wedge ab\star^2 S' \star^1 B}
    \end{align*} 
    
    The last equality holds because $S' \cup B$ is an independent set in $G \wedge ab$ \cite{claudet2024local}. 
    
    In both cases (b of degree odd or even), that translates to $S'$ being 2-incident in $G'$. Then, suppose by contradiction that there exists a set $A' \se S'$ such that $G' \star^2 S' = G' \star^1 A'$. In both cases (b of degree odd or even), this translates to $G \wedge ab \star^2 S' = G \wedge ab \star^1 A'$. We prove, using \cref{lemma:1lc_Xrotation}, that this implies the existence of some set $A \se S$ such that $G \star^2 S = G \star^1 A$.
    \begin{align*}
    &\ket{G \star^2 S}\\
    =&\ket{G \star^2 S \wedge ab \wedge ab}\\
    =&\ket{G \wedge ab \star^2 S' \star^1 B \wedge ab}\\
    =&\ket{G \wedge ab \star^1 A' \star^1 B \wedge ab}\\
    =&H_a H_b \bigotimes_{u\in A' \Delta B}X\left(\frac {\pi}{2}\right)\\
    & \bigotimes_{v\in V \sm (A' \cup B)}Z\left(-\frac {\pi}{2}|N_{G\wedge ab}(v) \cap (A' \Delta B)|\right) H_a H_b \ket{G}\\
    =&\bigotimes_{u\in A}X\left(\frac {\pi}{2}\right)\bigotimes_{v\in V \sm A}Z\left(\theta_v\right)\ket{G}\\
    =&\ket{G \star^1 A}
    \end{align*}
    
    Overall, $A \Delta A'= \emptyset$, $\{a\}$, $\{b\}$ or $\{a,b\}$.
\end{proof}

\begin{lemma}\label{lemma:all_odd_after_pivoting}
    Let $G$ be a graph bipartite with respect to a bipartition $S, V \sm S$ of the vertices. 
    Let $a \in S$ and $b \in V\sm S$ such that $a~\sim_G~b$ and $b$ is the only vertex of even degree in $G$.
    Then, $G \wedge ab - b$ has only vertices of odd degree.
\end{lemma}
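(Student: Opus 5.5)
Since pivoting on a bipartite graph has an explicit description, I will compute the degree of every vertex of $G \wedge ab - b$ directly, using parity counting. Write $N_a = N_G(a) \subseteq V\sm S$ and $N_b = N_G(b) \subseteq S$, both containing the other endpoint. By the description recalled above, $G\wedge ab$ is obtained by toggling every edge between $N_a\sm\{b\}$ and $N_b\sm\{a\}$, and then swapping the roles of $a$ and $b$. Concretely, in $G\wedge ab$ the vertex $a$ has neighborhood $(N_b\sm\{a\})\cup\{b\}$ and $b$ has neighborhood $(N_a\sm\{b\})\cup\{a\}$. Deleting $b$ then removes one edge at each vertex of $(N_a\sm\{b\})\cup\{a\}$.

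The plan is to check four types of vertices in turn.

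\textbf{Vertex $a$.} After the pivot, $\deg a = |N_b|$, which is even by hypothesis. Removing $b$ lowers it to $|N_b|-1$, which is odd.

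\textbf{Vertices $v \in N_b\sm\{a\}$, with $v\in S$.} The toggling changes $\deg v$ by $|N_a\sm\{b\}| - 2|N_v\cap (N_a\sm\{b\})|$, which is $\equiv \deg a - 1 \equiv 0 \pmod 2$ since $\deg a$ is odd. Next, $v$ loses its edge to $b$ and gains an edge to $a$, so the net change is zero. The vertex $v$ is not adjacent to $b$ in $G\wedge ab$, so deleting $b$ does not affect it. Its degree therefore stays odd.

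\textbf{Vertices $w\in N_a\sm\{b\}$, with $w \in V\sm S$.} By the same computation, toggling changes the parity of $\deg w$ by $|N_b\sm\{a\}| \equiv \deg b - 1 \equiv 1 \pmod 2$. Swapping $a$ and $b$ replaces the edge $wa$ by the edge $wb$. Deleting $b$ then removes that edge, flipping the parity once more. The two flips cancel, so $\deg w$ stays odd.

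\textbf{All other vertices, except $b$.} These are vertices of $S\sm N_b$ and of $(V\sm S)\sm N_a$. They are untouched by the pivot and are not adjacent to $b$ afterwards, so their degrees are unchanged and remain odd by hypothesis.

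Together these cases cover every vertex of $G\wedge ab - b$, so all of them have odd degree. The main obstacle is being careful about the swap of $a$ and $b$, specifically checking that $b$'s new neighborhood is exactly $(N_a\sm\{b\})\cup\{a\}$. Once that is settled, the rest is parity bookkeeping.
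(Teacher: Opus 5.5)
Your proof is correct and follows essentially the same route as the paper. Both use the explicit bipartite description of pivoting (toggle the edges between $N_G(a)\sm\{b\}$ and $N_G(b)\sm\{a\}$, then swap $a$ and $b$) and check parities on the classes $\{a\}$, $N_G(b)\sm\{a\}$, $N_G(a)\sm\{b\}$ and the remaining vertices; the paper just groups this as showing that the neighbors of $b$ in $G\wedge ab$ have even degree and all other vertices odd degree, so that deleting $b$ leaves every degree odd.
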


\begin{proof}
    The neighborhood of $b$ in $G \wedge ab$ is the same as the neighborhood of $a$ in $G$, more precisely, $N_{G \wedge ab}(b) = N_G(a) \Delta \{a,b\}$. 
    
    First, let us prove that the degree of every neighbor of $b$ in $G \wedge ab$ is even. $|N_{G \wedge ab}(a)| = |N_G(b) \Delta \{a,b\}| = |N_G(b)|+|\{a,b\}| \bmod 2 = 0 \bmod 2$. Also, for any $u \in N_{G \wedge ab}(b) \sm \{a\} = N_G(a) \sm \{b\}$, $|N_{G \wedge ab}(u)| = |N_G(u) \Delta N_G(b) \Delta \{a\}| =|N_G(u)|+| N_G(b)|+|\{a\}| \bmod 2 = 0 \bmod 2$.
    
    Second, the degree of every vertex other than $b$ and its neighborhood in $G \wedge ab$ is odd, in particular, for any $v \in N_{G \wedge ab}(a) \sm \{b\} = N_G(b) \sm \{a\}$, $|N_{G \wedge ab}(v)| = |N_G(v) \Delta N_G(a) \Delta \{b\}| =|N_G(v)|+| N_G(a)|+|\{b\}| \bmod 2 = 1 \bmod 2$.
    
    Thus, after removing $b$ from $G \wedge ab$, each vertex has odd degree.
\end{proof}

\begin{lemma}\label{lemma:notwins_nodeg1}
    Let $G$ be a graph bipartite with respect to a bipartition $S, V \sm S$ of the vertices where each vertex in $G$ has odd degree, also $S$ is 2-incident and contains no vertices of degree 1. Then, $V \sm S$ contains no vertices of degree 1 and no twins.
\end{lemma}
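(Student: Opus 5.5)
Both claims should follow directly from the pair condition in the definition of 2-incidence: for distinct $u,v \in V\sm S$, the number $|N_G(u)\cap N_G(v)\cap S|$ must be even. Since $G$ is bipartite with respect to $S, V\sm S$, every neighbor of a vertex of $V\sm S$ lies in $S$. So for $u\in V\sm S$ we have $N_G(u)\se S$, and the intersections simplify to $N_G(u)\cap N_G(v)$. The triple condition and the hypothesis on degrees of vertices in $V\sm S$ will not be needed beyond this; the oddness of degrees and the absence of degree-1 vertices in $S$ are the facts that produce the contradiction.

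\emph{No twins in $V\sm S$.} Suppose $u,v\in V\sm S$ are distinct twins. Both lie in the same independent side, so they are not adjacent, and the twin condition reads $N_G(u)=N_G(v)$. Then $|N_G(u)\cap N_G(v)\cap S| = |N_G(u)|$, which is odd because every vertex of $G$ has odd degree. This contradicts the pair condition of 2-incidence.

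\emph{No degree-1 vertices in $V\sm S$.} Suppose $u\in V\sm S$ has $N_G(u)=\{a\}$ with $a\in S$. The degree of $a$ is odd and, by hypothesis, different from $1$, so it is at least $3$. Hence $a$ has a neighbor $w\neq u$, and necessarily $w\in V\sm S$. Then $N_G(u)\cap N_G(w)\cap S=\{a\}$ has size $1$, which again contradicts the pair condition applied to the distinct vertices $u,w$.

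There is no real obstacle here. The only points needing care are checking that the vertices to which the pair condition is applied are distinct and lie in $V\sm S$, and using bipartiteness to justify both the equality $N_G(u)=N_G(v)$ for twins and the inclusion $N_G(u)\se S$.
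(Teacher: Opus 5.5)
Your proof is correct and follows the paper's argument essentially verbatim: distinct twins in $V\sm S$ would have a common neighborhood in $S$ of odd size, and a degree-1 vertex $u$ with unique neighbor $a$ would share exactly $\{a\}$ with another neighbor $w$ of $a$; both contradict the pair condition of 2-incidence. One small wording slip: your opening remark says the degree hypothesis on $V\sm S$ is not needed, but the twin case uses exactly that $|N_G(u)|$ is odd for $u\in V\sm S$ (and in the second case you only need $\deg a \geq 2$, not oddness).
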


\begin{proof}
    First, suppose by contradiction that $b,c \in {V \sm S}$ are twins. Then, $|N_G(b) \cap N_G(c) \cap S| = |N_G(b)| = 1 \bmod 2$, contradicting the 2-incidence of $S$. 

    Second, suppose by contradiction that $b \in V \sm S$ is adjacent to a unique vertex $a \in S$. As $a$ does not have degree 1, there exists a vertex $c \in V \sm S$ adjacent to $a$. Then, $|N_G(b) \cap N_G(c) \cap S| = |\{a\}|= 1$, contradicting the 2-incidence of $S$.
\end{proof}

We are now ready to prove \cref{lemma:reduction}, which we restate below for convenience.

\reduction*

\begin{proof}

If there exists an $n$-qubit counterexample to the LU-LC conjecture where $n \ls 31$, then there exists a graph $G=(V,E)$ on $n$ vertices and an independent set $S \se V$ satisfying properties 4-5 \cite{claudet2025deciding}. We introduce an algorithm that transforms $G$ into a graph satisfying properties 1-5.

\begin{itemize}
    \item \textbf{Step 1:~}Remove the edges between vertices of $V \sm S$. This makes $G$ bipartite with respect to the bipartition $S, V \sm S$ of the vertices.
    \item \textbf{Step 2:~}Remove each vertex in $S$ of degree 1. Then, until $S$ contains no twins, remove pairs of twins in $S$ (one pair at a time). Finally, remove each vertex in $G$ of degree 0.
    \item \textbf{Step 3:~}If $V \sm S$ contains a vertex $b$ of even degree, choose $a \in S$ such that $a \sim_G b$, then replace $G$ by $G \wedge ab -b$ and $S$ by $S \sm \{a\}$, then go to step 2. 
    \item \textbf{Step 4:~}If $S$ contains vertices of even degree, i.e. if the set $S_{\text{even}}=\{u \in S ~|~ |N_G(u)| = 0 \bmod 2\}$ is not empty, add a vertex $b_{\text{even}}$ to $G$ and connect it to every vertex in $S_{\text{even}}$, then go to step 3.
\end{itemize}

~\paragraph{Correctness. } Step 1 preserves properties 4-5, and after step 1, $G$ remains bipartite. Step 2 preserves properties 4-5. When step 2 is completed, there is no vertex of degree 0 in $G$ and $S$ contains no vertices of degree 1 and no twins. According to \cref{lemma:pivoting_preserves_properties}, step 3 preserves properties 4-5. 
When step 3 is completed,  every vertex in $V \sm S$ has odd degree.
Step 4 preserves that there is no vertex of degree 0 in $G$ and that $S$ contains no vertices of degree 1 and no twins.
Also, we show that step 4 preserves property 4, i.e. $S$ remains 2-incident. We need to check that for any (not necessarily distinct) $u,v \in (V \sm S)\sm b_{\text{even}}$, 
$|N_G(u) \cap N_G(v) \cap S_{\text{even}}| = 0 \bmod 2$. First note that $S_{\text{even}}= S \Delta \left( \bigdelta_{w \in (V \sm S)\sm b_{\text{even}}} N_G(w)\right)$. Then, 
    \begin{align*}
        &\left|N_G(u) \cap N_G(v) \cap S_{\text{even}}\right|\\
        =& |N_G(u) \cap N_G(v)|\\
        &+\sum_{w \in (V \sm S)\sm b_{\text{even}}}|N_G(u) \cap N_G(v) \cap N_G(w)| \bmod 2\\
        =&0 \bmod 2
    \end{align*}
    
    Thus, $S$ is still 2-incident. It is also easy to check that step 4 preserves property 5. When step 4 is completed, the degree of every vertex is odd, and $S$ contains only vertices of degree at least 3 and contains no twins. ${V\sm S}$ contains only vertices of degree at least 3 and contains no twins either, according to \cref{lemma:notwins_nodeg1}. In other words, properties 1-5 are satisfied. 

~\paragraph{Termination. } The number of vertices strictly decreases when the condition of step 3 is met, which guarantees to reach step 4. If the condition of step 4 is not met, the algorithm ends. If the condition of step 4 is met, after adding $b_{\text{even}}$ to $G$, every vertex other than $b_{\text{even}}$ has odd degree. 
If $b_{\text{even}}$ has odd degree, afterward the algorithm ends, as the conditions of both step 3 and 4 are not met. If $b_{\text{even}}$ has even degree, afterward, step 3 chooses $a_{\text{even}} \in S_{\text{even}}$ such that $a_{\text{even}} \sim_G b_{\text{even}}$, then replaces $G$ by $G \wedge a_{\text{even}}b_{\text{even}} -b_{\text{even}}$ and $S$ by $S \sm \{a_{\text{even}}\}$. Every vertex in $G$ is now odd according to \cref{lemma:all_odd_after_pivoting}. Afterward, either step 2 strictly decreases the number of vertices, or the algorithm ends, as the conditions of both step 3 and 4 are not met.
\end{proof}

We also give the proof of \cref{lemma:matrix}, which we restate below for convenience.

\matrix*

\begin{proof}
    It is obvious that if $G$ satisfies properties 2-4, then $M_G$ satisfies these properties above. Conversely, if $M_G$ satisfies these properties above, then $G$ (bipartite according to a bipartition $S, V \sm S$ of the vertices) contains only vertices of odd degree. Also, $S$ contains only vertices of degree at least 3, contains no twins, and is 2-incident. \cref{lemma:notwins_nodeg1} allows us to conclude.
\end{proof}

\end{document}